\documentclass[onecolumn]{IEEEtran}

\usepackage{amsmath,amssymb,amsfonts}
\usepackage[ruled,vlined,linesnumbered]{algorithm2e}
\SetKwInOut{Init}{Initialize}
\SetKwInput{KwInput}{Input}
\SetKwInput{KwOutput}{Output}
\SetKwRepeat{DoWhile}{do}{while}
\SetKwBlock{Legend}{Legend}{end legend}
\usepackage{graphicx, float}
\usepackage{textcomp}
\usepackage{xcolor}
\usepackage{bbm}
\usepackage{comment}
\usepackage{enumitem}
\usepackage{tikz}
\usetikzlibrary{positioning, arrows.meta}
\usepackage{tabularx}
\usepackage{dsfont}
\usepackage{multirow}
\usepackage{amsthm}
\usepackage{subcaption}
\usepackage{url}

\newtheorem{lemma}{Lemma}

\newtheorem{remark}{Remark}

\usepackage{tablefootnote}
\usepackage{booktabs}

\begin{document}

\title{Techno-Economic Analysis of Shared Mobile Storage for Demand Charge Reduction}


\author{B Hari Kiran Reddy, \textit{Student Member, IEEE}, Ge Chen, \textit{Member, IEEE}, and Junjie Qin, \textit{Member, IEEE}%
\thanks{B. H. K. Reddy and J. Qin are with the Elmore Family School of Electrical and Computer Engineering, Purdue University, West Lafayette, IN, USA. G. Chen is with the School of Advanced Engineering, Great Bay University, Dongguan, China. 
(Emails: \texttt{reddy83@purdue.edu}; \texttt{gechen@gbu.edu.cn}; \texttt{jq@purdue.edu})}}


\maketitle

\begin{abstract}
This paper investigates the techno-economic viability of shared electric vehicle (EV) fleets for demand charge reduction under practical logistical and operational constraints. Unlike idealized models that overlook transit overheads, we propose a high-fidelity fleet management framework that explicitly accounts for the spatio-temporal coupling of energy consumption, labor costs for EV drivers, and battery degradation. We formulate the dispatch problem as a mixed-integer linear program (MILP) that jointly minimizes demand charges and total cost of ownership. To address the computational complexity arising from path-dependent constraints, we develop a marginal-value-based heuristic algorithm that achieves near-optimal performance with high computational efficiency. Using real-world data from San Francisco, our analysis reveals that a modest number of EVs can achieve significant demand charge savings, sufficient to recover the ownership and operational expenses. Our results also show how tariff structures, fleet size, and cost components influence overall profitability.
\end{abstract}

\begin{IEEEkeywords}
Electric vehicle sharing, mobile storage,  demand charge reduction, heuristic optimization algorithms
\end{IEEEkeywords}

\vspace{-0.25cm}
\section{Introduction}

Electric vehicles (EVs) are widely recognized as an eco-friendly alternative to internal combustion engine vehicles, offering lower greenhouse gas emissions and reduced operating costs \cite{sanguesa2021review}. Some EVs like Nissan Leaf \cite{schram2020empirical} can both draw energy from the grid and supply it back to external loads when equipped with bidirectional chargers. Beyond mobility, these vehicles provide added value through services such as emergency backup power and demand response \cite{9875109}.

A particularly promising avenue is the integration of EVs with commercial and industrial (C\&I) facilities to address demand charges. These tariff components, which are proportional to the maximum real power demand averaged over short intervals (e.g., 15 minutes), often constitute a significant portion of electricity costs for C\&I users. While stationary battery energy storage systems (BESS) are the conventional solution for peak shaving, they suffer from high upfront investment costs  and spatial rigidity, as an asset installed at one site cannot serve another~\cite{schmidt2017dynamic, luo2015overview}. In contrast, by leveraging the mobility of EVs, a single shared asset can serve multiple users across different locations during their respective peaks. This \textit{mobile storage} paradigm theoretically offers superior asset utilization and a lower investment barrier compared to stationary infrastructure.


However, transforming this theoretical promise into a viable business operation requires overcoming significant logistical hurdles often overlooked in existing literature. Emerging studies frequently rely on idealized assumptions that neglect the spatio-temporal coupling of energy consumption and the labor costs of EV drivers, a dominant yet often omitted operational factor. Furthermore, the complex trade-off between battery degradation and varying charging infrastructure capabilities is rarely integrated into a unified dispatch objective. Neglecting these granular factors may yield solutions that are technically optimal but economically inviable. This motivates us to study the following critical open question: \textit{How much net value can truly be realized once demand charge savings from EV services are weighed against charging costs, battery wear, labor expenses, commuting requirements, and charger investments?}

\subsection{Contributions and paper organization} 
To address these challenges, we develop a high-fidelity fleet management framework that rigorously quantifies the economic value of EV fleets for demand charge reduction under practical operating conditions. Here, ``high-fidelity'' refers to the inclusion of operational and economic factors that materially affect the techno-economic feasibility of shared mobile energy storage, including transit energy consumption, travel-dependent dispatch coupling, labor cost, charging infrastructure cost, and battery degradation costs represented through established depreciation models.  By explicitly accounting for these practical considerations, which are often neglected in prior mobile energy storage studies, the proposed framework establishes a more realistic baseline for evaluating the commercial viability of shared mobile energy storage. The main contributions of this work are as follows:

\begin{enumerate}
    \item We formulate a mixed-integer linear program (MILP) that explicitly captures the spatio-temporal coupling of mobile storage. Unlike prior studies, our model integrates transit energy consumption, labor costs, and battery degradation into a unified dispatch objective, ensuring operational feasibility and accurate economic assessment.

    \item To tackle the computational complexity associated with user-EV dispatch decisions and the corresponding discharging, transit and charging requirements, we propose a marginal-value-based heuristic algorithm. It iteratively dispatches the least-used EVs to the most-valued services, achieving near-optimal performance while significantly reducing computation time for daily fleet operations.

    \item Through a case study using real-world data from San Francisco, we identify labor costs as the dominant factor affecting the revenue. Furthermore, we propose and validate a \textit{tiered} charger deployment strategy (mixing DC fast and AC Level-2 chargers across  users), demonstrating that it yields superior net savings compared to uniform infrastructure configurations.
\end{enumerate}





The remaining parts are organized as follows. Section~\ref{sec:FRM} describes the centralized framework to evaluate the profitability of the business model. Section~\ref{sec:HA} introduces the proposed algorithm. Section~\ref{sec:CS} demonstrates the case study using the empirical data, and Section~\ref{sec:CON} concludes the paper and discusses future work.

\subsection{Related literature}\label{sec:RL}
Existing literature on peak demand reduction spans stationary on-site resources, managed EV integration, and emerging mobile storage concepts.

\subsubsection{Peak demand reduction using stationary on-site resources} 
Prior research has largely focused on stationary on-site resources such as BESS, diesel standby generation, and PV+storage hybrids, examining techno-economic sizing and dispatch for demand charge reduction~{\cite{zahari2024integrating}, \cite{li2018performance}. Case-study results consistently report meaningful savings and provide design heuristics, but these approaches are spatially fixed and cannot exploit the flexibility offered by mobile assets~{\cite{shi2022lyapunov}.

\subsubsection{EVs as flexible behind-the-meter assets} As EV adoption accelerates and vehicles remain parked for most of the day, a growing body of work has explored their potential for grid support~\cite{jian2012regulated} and bill reduction~{\cite{alam2014controllable},~{\cite{neubauer2015deployment}. Managed (uni- or bi-directional) charging at non-residential sites can materially reduce demand charges compared to uncontrolled charging, and several optimization frameworks, ranging from mixed-integer linear programming (MILP)~{\cite{liu2024peak} or quadratic programming (QP) {\cite{ioakimidis2018peak}} to stochastic~{\cite{cardoso2013stochastic}, \cite{sen2025online} and robust~{\cite{zhao2025day} formulations, have been developed to schedule charging/discharging under operational constraints~{\cite{mousaei2024advancing}. These studies establish EVs as flexible, distributed storage resources for peak shaving, though most analyses remain confined to a single facility or parking site and overlook the possibility of coordinated multi-user deployment.

\subsubsection{Toward mobility-aware storage and fleet-based coordination (the remaining gap)}

A smaller but significant stream
of research has begun to treat EVs as mobile energy storage. Existing studies have demonstrated the feasibility of using shared EVs for demand-charge reduction and developed uncertainty-aware mechanisms for service request generation and matching~\cite{qin2021mobile,chen2026monetizing}. Related work has also investigated transportation-energy service coordination, optimal relocation of mobile storage based on locational marginal prices, and joint logistics-energy scheduling of electric fleets~\cite{10542422,qin2020piggyback,agwan2023mobile}. These studies highlight the value of mobility and fleet coordination for energy services, yet they generally focus on specific aspects such as matching, relocation, uncertainty management, or logistics operation. A comprehensive techno-economic framework for a fleet operator providing solely demand-charge reduction services to multiple independent C\&I users, while jointly considering service assignment, transit energy, charging infrastructure, labor cost, battery degradation, and tariff-dependent demand-charge savings, remains largely unexplored. This gap motivates the present work. {Table~\ref{tab:lit_compare} compares representative studies and highlights the distinguishing features of the proposed framework.

\begin{table*}[htbp]
\caption{Comparison with representative studies on EV-based energy services}
\centering
\begin{tabular}{lcccccc}
\toprule
Feature &
\cite{qin2021mobile} &
\cite{chen2026monetizing} &
\cite{10542422} &
\cite{qin2020piggyback} &
\cite{agwan2023mobile} &
This Work \\
\midrule
Demand-charge reduction objective & \checkmark & \checkmark & \checkmark & -- & -- & \checkmark \\
Multiple C\&I users & \checkmark & \checkmark & -- & -- & -- & \checkmark \\
Uncertainty-aware request generation & -- & \checkmark & -- & -- & -- & -- \\
Explicit EV transit energy & -- & Partial & \checkmark & Partial & \checkmark & \checkmark \\
Fleet logistics / routing & -- & -- & \checkmark & \checkmark & \checkmark & \checkmark \\
Charging infrastructure modeling & Limited & Limited & Limited & -- & -- & \checkmark \\
Battery degradation modeling & -- & -- & \checkmark & -- & -- & \checkmark \\
Labor cost modeling & -- & -- & -- & -- & -- & \checkmark \\
Fleet sizing analysis & -- & -- & -- & -- & -- & \checkmark \\
Techno-economic assessment & Limited & Limited & Partial & -- & -- & \checkmark \\
\bottomrule
\end{tabular}
\label{tab:lit_compare}
\end{table*}

\section{Problem Formulation}\label{sec:FRM}

We consider a centralized fleet management framework where a dedicated operator coordinates a fleet of EVs to provide peak-shaving services to a portfolio of C\&I facilities. Unlike stationary storage, these mobile assets are subject to complex spatio-temporal constraints: providing service at a specific location requires physically traversing the road network, which incurs both time delays and energy consumption. The operator's objective is to determine the optimal dispatch schedule, including EV travel decisions, charging, and discharging profiles, that minimizes the system-wide cost. This cost function explicitly accounts for the trade-off between demand charge savings, battery degradation, transit energy expenditure, and charging electricity costs. Figure~\ref{fig:Mdl} provides the schematic of this framework.

 \begin{figure}[h]
\hspace*{4cm}
\begin{tikzpicture}[
        platform/.style={rectangle, draw=black, thick, line width=1.5pt, minimum width=2cm, minimum height=1cm, align=center},
        ev/.style={rectangle, draw=black, thick, line width=1.5pt, minimum width=2cm, minimum height=1cm, align=center},
        cs/.style={rectangle, draw=black!70, thick, line width=1.5pt, minimum width=2cm, minimum height=1cm, align=center},
        cni/.style={rectangle, draw=black, thick,line width=1.5pt, minimum width=2cm, minimum height=1cm, align=center},
        legend/.style={rectangle, draw=black, line width=1.5pt, fill=white, thick, align=left, minimum width=6cm, minimum height=0.5cm,, font=\footnotesize}
    ]

    \node[platform] (platform) {Fleet Operator};
    \node[ev, below left = 1.8cm of platform] (ev) {EV Drivers};
    \node[cni, below right=1.8cm of platform] (cni) {C\&I Users};
    \draw[<-, thick, line width=1.5pt, >=Stealth, blue, text=blue]  (-4.25,-1.8) -- (-1.2,0.25)  node[midway, above left = 0cm, font = \footnotesize, align=center] {service intervals,\\ service location};
    \draw[<-, thick, line width=1.5pt, >=Stealth, orange, text=orange]  (-3.5,-1.8) -- (-1.2,-0.25)  node[midway, below right = 0cm, font = \footnotesize, align=center] {Payment};
    \draw[<-, thick, line width=1.5pt, >=Stealth, blue, text=blue]  (1.2,0.25) -- (4.25,-1.8)  node[midway, above right = 0cm, font = \footnotesize, align=center] {load profile,\\user location};
    \draw[<-, thick, line width=1.5pt, >=Stealth, orange, text=orange]  (1.2,-0.25) -- (3.5,-1.8)  node[midway, below left = 0cm, font = \footnotesize, align=center] {Payment};

    \draw[->, thick, line width=1.5pt, >=Stealth, color=black] (-2.3,-2.3) --  (2.3,-2.3) node[midway, above = 0cm, font = \footnotesize, align=center] {Energy during service intervals};
    \node[legend, line width=1.5pt, below=2.5cm of platform] (legend) {
        \textbf{Legend:} \quad
        \tikz \draw[thick, ->, line width=1.5pt, >=Stealth, color=blue] (0,0) -- (0.7,0); Information
        \tikz \draw[thick, ->, line width=1.5pt, >=Stealth, color=black] (0,0) -- (0.7,0); Energy
        \tikz \draw[thick, ->, line width=1.5pt, >=Stealth, color=orange] (0,0) -- (0.7,0); Payment
    };
   
\end{tikzpicture}
\caption{Schematic of centralized fleet management framework.}
\label{fig:Mdl}
\end{figure}

For example, consider a simple setting with two users and two EVs which are coordinated through a centrally managed charging station. The operator determines whether each EV should remain at the charging station for recharging or be dispatched to one of the users based on the resulting demand charge savings and the associated operational costs. Once dispatched, each EV must first spend some time in transit before providing service at the user location. Depending on the temporal variation of user demand and the battery state-of-charge levels, both EVs may simultaneously serve different users, or multiple EVs may sequentially serve the same user across different intervals if prolonged peak-shaving support is required. After completing the service, an EV may either be dispatched to another user if additional profitable service opportunities exist or return to the charging station for recharging. Consequently, dispatch decisions, charging schedules, transit requirements, and service intervals become strongly coupled across both users and EVs over time. An illustrative workflow of this two-user, two-EV example is shown in Figure~\ref{fig:sim_work_2u_2ev}.

\begin{figure}[htbp]
    \centering
    \includegraphics[width=\linewidth]{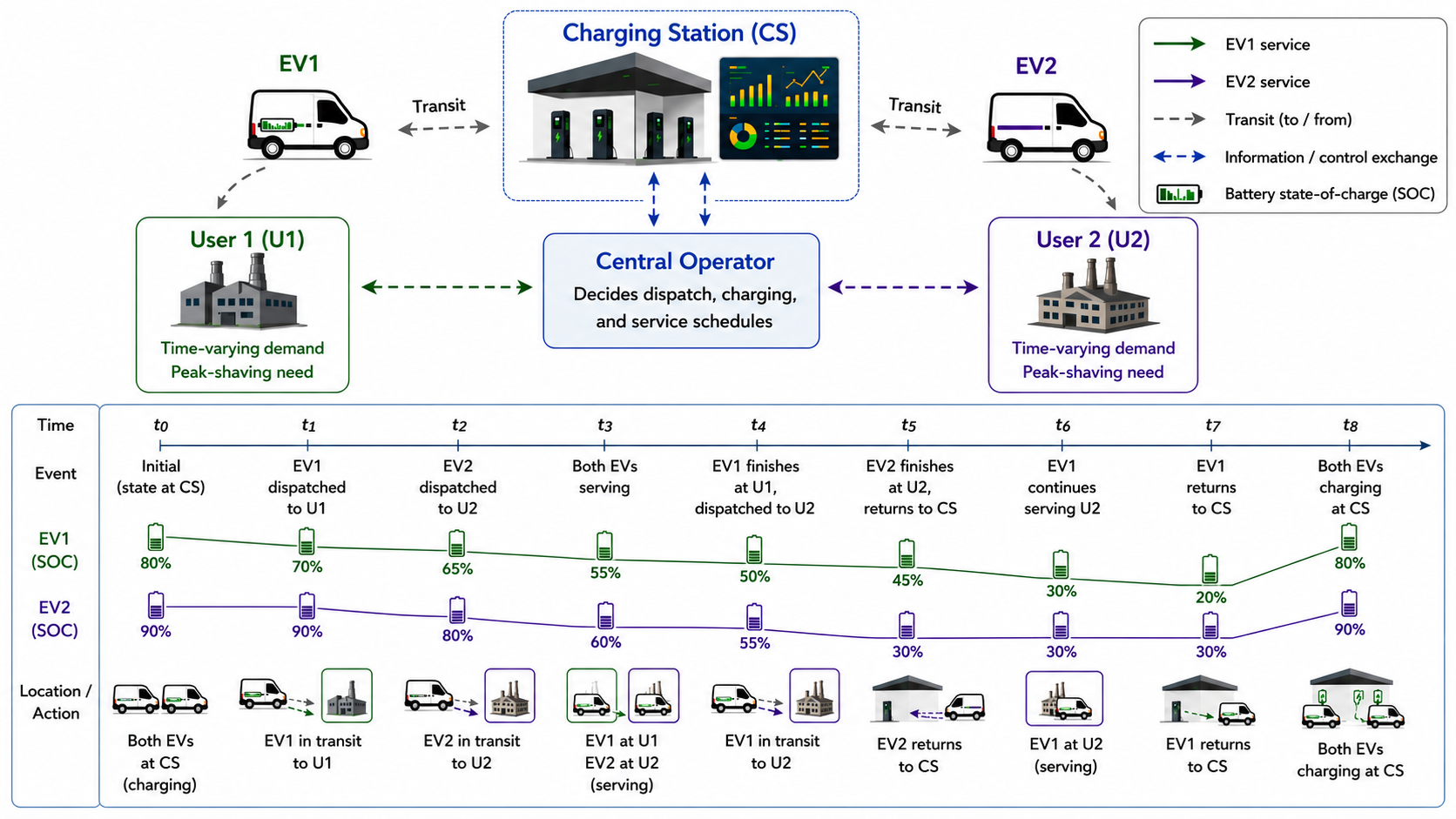}
    \caption{Illustrative workflow for the two-user, two-EV example.}
    \label{fig:sim_work_2u_2ev}
\end{figure}

We consider $I$ users and $J$ EVs participating in this business model. 
As our primary goal is to assess the viability of the business model at the planning horizon, we adopt the following setting typical in planning studies.
In particular, each user $i \in \mathcal{I} := \{1,\cdots,I\}$ provide monthly energy consumption profiles obtained from historical data or forecasting tools, while each EV $j \in \mathcal{J} := \{1,\cdots,J\}$ is dedicated to delivering electricity services over the study horizon. 


Under these model choices, we formulate the optimization problem for the fleet operator to dispatch the EVs optimally. The objective is to minimize the users' demand charges and EVs' operational cost for a month. The decision variables are the dispatch matrices  $\left(\mathbf{M}_{j} \right)_{j\in \mathcal{J}}$. Here, $\mathbf{M}_{j} \in \{0,1\}^{T \times I}$ represents EV $j$'s dispatch to users' locations during the set of intervals $\mathcal{T} = \{1,\cdots,T\}$ where $T$ is the total number of intervals of the month.  An entry $ M_{j,t,i} = 1 $ in $\mathbf{M}_{j}$ represents that EV~$j$ is either in transit to user~$i$'s location or is at user~$i $'s location during the  interval~$t$. Conversely, $M_{j,t,i} = 0$ suggests EV $j$'s absence from user~$i$’s location during interval $t$. Here, we consider that the transit between any two locations takes a nonzero duration that is shorter than one interval, and that corresponding interval is reserved exclusively for transit\footnote{This is reasonable given the close geographic proximity of the users, which leads to short travel times relative to the modeling interval length.}.

For notational simplicity, let $\mathbf{M} := \left(\mathbf{M}_{j}\right)_{j \in \mathcal{J}}$ denote the collection of dispatch matrices. Then, we can express the optimization problem for the fleet operator in determining the dispatch of EVs as:
\begin{subequations}\label{prb:plt}
\begin{align}
 \min_{  \mathbf{M} } \quad & \sum_{i \in \mathcal{I}}  \mathrm{DC}_{i} \left(  \mathbf{M}  \right) +  \mathrm{OC}\left( \mathbf{M} \right)  \\
\textrm{s.t.} \quad & \mathbf{M}_{j} \in \{0,1\}^{T \times I}, \quad \qquad j \in \mathcal{J}, \label{dom_M}\\
& \mathbf{M}_{j}\mathbf{1} \leq \mathbf{1}, \qquad \qquad \qquad  j \in \mathcal{J}, \label{alloc_mat:const_1}\\
&\sum_{ j \in \mathcal{J}} M_{j,t-1,i} \, M_{j,t,i} \leq 1, \, t \in \mathcal{T}\setminus \{1\}, \, i \in \mathcal{I}, \label{alloc_mat:const_2}
\end{align}
\end{subequations}
where $\mathrm{DC}_{i}$ represents the demand charge of user $i$ and $\mathrm{OC}$ is a comprehensive term that captures the physical costs of mobility, including the battery depreciation cost due to cycling and the grid electricity cost for recharging the fleet.
$\mathrm{DC}_{i}$ and $\mathrm{OC}$ can be formulated as a function of dispatch matrices, which will be introduced in detail in Subsections~\ref{ss:ci_user} and \ref{EV_op} respectively. Constraint~\eqref{alloc_mat:const_1} ensures that each EV is dispatch to at most one user during any interval. 

The last constraint reflects the physical limitation that each user is equipped with a single bidirectional charger and therefore at most one EV can actively provide service during any interval~$t$. A straightforward formulation, $\sum_{j \in \mathcal{J}} M_{j,t,i} \leq 1$,
is overly restrictive because \(M_{j,t,i}=1\) indicates that EV \(j\) is either traveling toward or physically present at user \(i\). Consequently, this formulation would incorrectly prevent a second EV from traveling to the same user location while another EV is actively providing service.\footnote{This distinction becomes important when peak-shaving service is required over multiple consecutive intervals. In such cases, a single EV may not possess sufficient battery energy to sustain uninterrupted service, requiring a second EV to begin transit toward the user before the first EV departs.}

To illustrate this behavior, consider an example with two EVs serving user \(i\), where peak-shaving service is required during intervals \(t, t+1, t+2,\) and \(t+3\) as shown in Figure~\ref{fig:ts_ill_ex}. Suppose EV~1 begins traveling toward the user during interval \(t-1\) and subsequently provides service during intervals \(t, t+1,\) and \(t+2\). Since one transit interval is required before service can begin,
\[
M_{1,t-1,i}=M_{1,t,i}=M_{1,t+1,i}=M_{1,t+2,i}=1.
\]
Now assume EV~1 does not have sufficient remaining battery energy to continue service during interval \(t+3\), forcing $M_{1,t+3,i}=0$.
To maintain uninterrupted service, EV~2 must begin serving the user during interval \(t+3\). Since one transit interval is again required before service can begin, EV~2 must already be traveling toward the user during interval \(t+2\), resulting in $M_{2,t+2,i}=M_{2,t+3,i}=1$.

As a consequence, $M_{1,t+2,i}+M_{2,t+2,i}=2$,
even though only EV~1 is actively providing service during interval \(t+2\). Here, EV~1 is serving the user, while EV~2 is in transit toward the same location in preparation for the service handoff. Therefore, directly enforcing
$\sum_{j \in \mathcal{J}} M_{j,t,i} \leq 1$
would incorrectly eliminate this physically feasible transition. Instead, the bilinear constraint~\eqref{alloc_mat:const_2} ensures that at most one EV can maintain continuous service assignment across consecutive intervals \((t-1)\) and \(t\), while still permitting transit-related overlap required for uninterrupted service continuity. These consecutive assignments arise implicitly from the structure of \(\mathrm{DC}_{i}\), discussed in Subsection~\ref{ss:ci_user}. 

\begin{figure}[htbp]
    \centering
    \includegraphics[width=\linewidth]{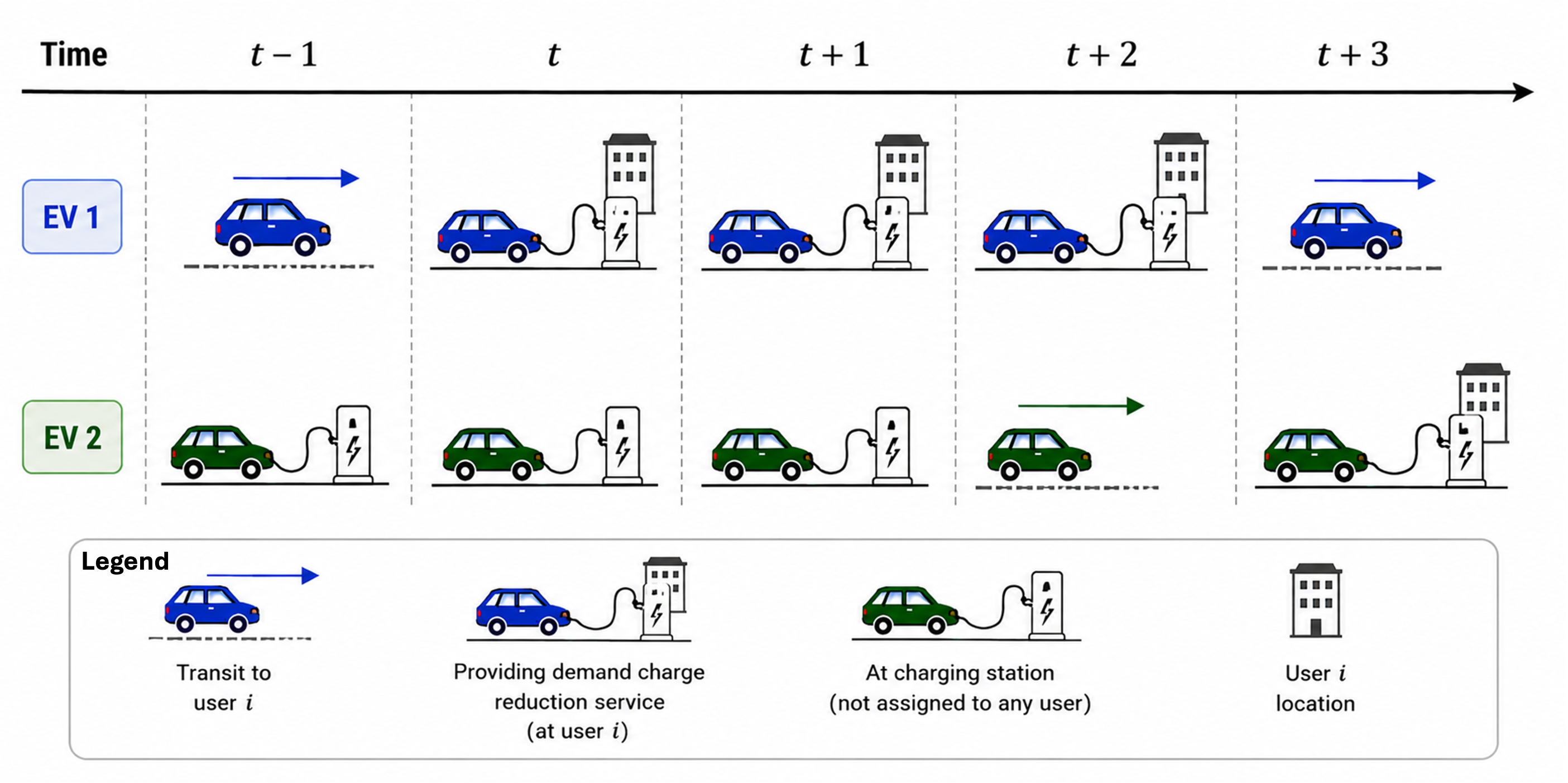}
    \caption{Illustrative timeline for a two-EV setting.}
    \label{fig:ts_ill_ex}
\end{figure}






\subsection{C\&I user's problem}\label{ss:ci_user}
This subproblem aims to minimize a user's demand charge (i.e., $\mathrm{DC}_{i}$ in problem~\eqref{prb:plt}) by optimally determining the service intervals given the dispatch matrices.
Dispatch matrices, by themselves, do not dictate the service intervals. They only establish the possibility of service.
For example, an EV could be dispatched to a user's location and serve the user in the current interval but the operator may find that providing service in the next interval does not lower the demand charge. In such cases, the operator may ask the EV driver to refrain from providing the service to the user in next interval and instead wait at the user’s location, anticipating a near-future interval when providing service does contribute to demand charge reduction.



For user $i$, let $\pmb{\ell}_{i} \in \mathbb{R}^{T}$ (in kWh) denote the vector of energy consumption measurements prior to the provision of peak-shaving services in a month. Many utilities like PG\&E compute demand charges across multiple time-of-use (TOU) periods (each of which containing a set of intervals) rather than a single peak, by identifying the maximum demand within each period and aggregating the corresponding charges. The TOU periods contributing to demand charges may differ across users. For user $i$, let $\Theta_{i}$ represent the set of TOU periods (e.g. peak, part-peak, off-peak) which contribute towards the demand charges, and for each TOU period $\theta \in \Theta_{i}$, let $\mathcal{T}^{\theta} \subseteq \mathcal{T}$ denote the set of intervals belonging to that period. The peak demand in period~$\theta$ is then defined as $ \max_{t \in \mathcal{T}^{\theta}} \left(\ell_{i,t} \right)/ \Delta$ (in kW) where $\Delta$ is the duration of  intervals in hours. With demand rates $\pi_{i}^{\theta}$ (in \$/kW), the total demand charge without the services can thus be obtained as $ \sum_{\theta \in \Theta_{i}} \tilde{\pi}_{i}^{\theta} \max_{t\in \mathcal{T}^{\theta}} \ell_{i,t} $ where $ \tilde{\pi}^{\theta}_{i} :=  \pi^{\theta}_{i} / \Delta$.

To determine the service intervals, we introduce a variable $\mathbf{u}_{i} \in \{0,1\}^{T}$ where $u_{i,t} = 1$ indicates that an EV provides service during  interval $t$, and $u_{i,t} = 0$ indicates no service during that interval. Given the dispatch matrices $\mathbf{M}$, the optimization problem for user $i$ to minimize the demand charge post receiving the services can be given as:
\begin{subequations}\label{prb:user}
\begin{align}
    \mathrm{DC}_{i}\left( \mathbf{M} \right)   = &  \min_{\mathbf{u}_{i}} \quad \sum_{\theta \in \Theta_{i}} \tilde{\pi}^{\theta}_{i} \max_{t\in \mathcal{T}^{\theta}} \left( \ell_{i,t} - \overline{p}_{i}\Delta u_{i,t} \right)\label{user:obj_fun}\\
& \textrm{ s.t. } \quad \mathbf{u}_{i} \in \{0,1\}^{T}, \label{dom_u}\\
 & \qquad \quad \mathbf{0} \leq \overline{p}_{i}\Delta \, \mathbf{u}_{i} \leq \pmb{\ell}_{i} \label{no_back},\\
& \qquad \quad u_{i,t} \leq \sum_{ j \in \mathcal{J}}  M_{j,t-1,i} \, M_{j,t,i} , \, t \in \mathcal{T} \setminus\{1\}\label{user:request},
 \end{align}
\end{subequations}
where $\overline{p}_{i}$ is bidirectional charger capacity (in kW) installed at the user's location. Constraint~\eqref{no_back} ensures that energy is not returned to the grid as some utility companies may impose severe penalties. Constraint~\eqref{user:request} ensures that  interval~$t$ can be a service interval only if an EV is dispatched to the user during both $(t-1)$ and $t$ intervals as transit cost one interval.


\subsection{EVs' operational problem}\label{EV_op}

This subsection aims to determine the operational cost of the EVs (i.e., $\mathrm{OC}$ in problem \eqref{prb:plt}) given the dispatch matrices.  This cost consists of two parts: (a) depreciation cost, denoted as $\mathrm{Dep}$, and (b) charging cost, denoted as $\mathrm{CC}$. Accordingly, the operational cost can be given as $\mathrm{OC} = \mathrm{Dep} + \mathrm{CC}$. Next, we introduce these two parts in detail.


\subsubsection{Depreciation cost}\label{ss:ev_charging}
We account for the depreciation cost of each EV $j \in \mathcal{J}$ using a usage-based approximation\footnote{This approximation monetizes battery usage at the fleet-planning level and is not intended to represent detailed cell-level electrochemical degradation or nonlinear battery aging behavior.}. Specifically, a depreciation rate $\gamma_{j}$ (in \$/kWh), which serves as an economic proxy for battery wear, is multiplied by the energy used for providing electricity services $\mathbf{e}^{\mathrm{dis}}_{j} \in \mathbb{R}^{T}$ and the energy used for transit $\mathbf{e}^{\mathrm{tr}}_{j} \in \mathbb{R}^{T}$. Therefore, the total depreciation cost of all EVs is:
\begin{align}
\label{Dep_cost}
 \mathrm{Dep}\left( \mathbf{M} \right) = \sum_{ j \in \mathcal{J}} \gamma_{j} \mathbf{1}^{\intercal} \left(\mathbf{e}^{\mathrm{dis}}_{j}  + \mathbf{e}^{\mathrm{tr}}_{j}  \right).
\end{align}
For EV $j$, the value of $\mathbf{e}_{j}^{\mathrm{dis}}$ is calculated by:
 \begin{align}\label{ev:discharge}
 e^{\mathrm{dis}}_{j,t} =  \sum_{i \in \mathcal{I}}  \overline{p}_{i} \Delta \, u_{i,t}^\star \, M_{j,t-1,i} \, M_{j,t,i}
\end{align} for $t \in \mathcal{T} \setminus \{1\}$ where $u_{i,t}^\star$ corresponds to the service intervals of user $i$ for  interval $t$, obtained as the solution to the optimization problem \eqref{prb:user}. In other words, $e^{\mathrm{dis}}_{j,t} = \overline{p}_{i}\Delta$ if EV~$j$ is dispatched to user $i$ during both intervals $(t-1)$ and $t$, and interval $t$ is a service interval for user $i$. If it is not true for all $I$ users, then $e^{\mathrm{dis}}_{j,t} = 0$. The value of $\mathbf{e}_{j}^{\mathrm{tr}}$ is calculated as the ratio of the distance traveled to the EV's range efficiency $\delta_{j}^{\mathrm{tr}}$ (in miles/kWh), which is given as:
\begin{align}
 e_{j,t}^{\mathrm{tr}} = \frac{1}{\delta_{j}^{\mathrm{tr}}} \big\| (x_{j,t}, y_{j,t})  - (x_{j,t-1}, y_{j,t-1}) \big\|_1
\label{ev:transit_ener}
\end{align}
for $t \in \mathcal{T} \setminus \{1\}$ with $e_{j,1}^{\mathrm{tr}} = 0$. Here, $\big\| \cdot \big\|_1 $ denotes the 1-norm, which is commonly used to approximate driving distance in urban context~\cite{krause1973taxicab}. Also, $\left(\mathbf{x}_{j}, \mathbf{y}_{j} \right) \in \mathbb{R}^{T} \times \mathbb{R}^{T} $ represents EV $j$'s \textit{x-y} coordinates over time. These coordinates depend on the dispatch matrices as:
\begin{subequations}
\begin{align}
 \mathbf{x}_{j}  =  x_{0} \left(  \mathbf{1} - \mathbf{M}_{j} \mathbf{1}\right)  +  \mathbf{M}_{j}  \mathbf{x}^{\mathrm{C\&I}} ,  \label{ev:x}\\
 \mathbf{y}_{j}  =  y_{0} \left(  \mathbf{1} - \mathbf{M}_{j} \mathbf{1}\right)  +  \mathbf{M}_{j} \mathbf{y}^{\mathrm{C\&I}}, \label{ev:y}
 \end{align}
\end{subequations}
where $ x_{0} $ and $y_{0}$ are the respective \textit{x-y} coordinates of a centrally managed charging station which all the EV recharge and the vectors $\mathbf{x}^{\mathrm{C\&I}} \in \mathbb{R}^{I}$ and $\mathbf{y}^{\mathrm{C\&I}} \in \mathbb{R}^{I}$ represent the coordinates of the C\&I users.



\subsubsection{EVs' charging cost}\label{ss:ev_charging}  We account for the EVs' charging cost by considering both the energy charge and the demand charge. This paper assumes that EVs only charge at the charging station. Poor management of EV charging scheduling may lead to high peak charging demands, resulting in a high demand charge. As a result, the EV charging schedules should also be optimized.



Given the dispatch matrices $\mathbf{M}$, we can optimize the charging schedules of each EV $j \in \mathcal{J}$, denoted as $\mathbf{e}_{j}^{\mathrm{ch}}~\in~\mathbb{R}^{T}$, and the corresponding state of charge of its battery, denoted as $\mathbf{b}_{j} \in \mathbb{R}^{T}$, such that the charging cost is minimized as follows:
\begin{subequations} \label{ev:charging_cost}
\begin{align}
 \mathrm{CC}\left( \mathbf{M} \right) = & 
 \min_{\left( \mathbf{e}_{j}^{\mathrm{ch}}, \mathbf{b}_{j}\right)_{ j \in \mathcal{J}}} \quad \pmb{\pi}_{0}^{\intercal}  \sum_{ j \in \mathcal{J}}\mathbf{e}_{j}^{\mathrm{ch}} + \sum_{\theta \in \mathrm{\Theta}_{0}}\tilde{\pi}^{\theta}_{0} \max_{t \in \mathcal{T}^{\theta}} \left(\sum_{ j \in \mathcal{J}}e_{j,t}^{\mathrm{ch}} \right) \\
& \quad \textrm{ s.t. } \qquad 0 \leq e_{j,t}^{\mathrm{ch}} \leq  \left(m_{j,t-1}\,m_{j,t}\right)\overline{p}_{0}\Delta , \, t \in \mathcal{T} \setminus \{1\}, \, j \in \mathcal{J}, \label{ev:char}\\
&  \qquad \qquad \quad b_{j,t} = b_{j,t-1} + \eta_{j}^{\mathrm{ch}} e_{j,t}^{\textrm{ch}} -  \left(1/\eta_{j}^{\mathrm{dis}}\right) \left( e^{\mathrm{dis}}_{j,t}  + e^{\mathrm{tr}}_{j,t} \right), t \in \mathcal{T} \setminus \{1\}, \, j \in \mathcal{J}, \label{eq:batt_dyn}\\
& \qquad \qquad  \quad \underline{b}_{j}\mathbf{1} \leq \mathbf{b}_{j}\leq \overline{b}_{j} \mathbf{1}, \, j \in \mathcal{J}, \label{eq:batt_lev}\\
& \qquad \qquad \quad  b_{j,t} = b^{\mathrm{EOD}}_{j}, \, t \in \mathcal{T}^{\mathrm{EOD}}, \, j \in \mathcal{J}, \label{eq:daily_cyc}     
 \end{align}
\end{subequations}
where $\pmb{\pi}_{0}$  (in \$/kWh) represents the TOU energy prices and $\tilde{\pi}_{0}^{\theta} := \pi_{0}^{\theta} / \Delta $ for $\theta \in \Theta_{0}$ with $\pi_{0}^{\theta} $ (in \$/kW) representing the demand rates across the TOU periods $\Theta_{0}$ at the charging station. For EV~$j$, $\mathbf{m}_{j} := \mathbf{1}~-~\mathbf{M}_{j}\mathbf{1}$, which takes the value 1 when the EV is not dispatched to any user's location, thereby is dispatched to the charging station and the value 0 if the EV is dispatched to a user's location. Here, $ \overline{p}_{0} $ denotes the charger capacity (in kW) installed at the charging station, $\eta^{\mathrm{ch}}_{j},  \eta^{\mathrm{dis}}_{j} \in (0,1]$ are the charging and discharging efficiencies respectively, $\overline{b}_{j}$ represents the EV's battery capacity, and $b_{j}^{\mathrm{EOD}}$ denotes the desired state of charge of EV battery by the end-of-day  intervals $\mathcal{T}^{\mathrm{EOD}}$ (e.g. midnight of each day of a month) for EV~$j$. 
Constraint~\eqref{ev:char} ensures that EVs can charge only at the charging station. This is because only when an EV (say EV~$j$) is placed at the charging station during $(t-1)$ and $t$ intervals accounting for transit, the value of $m_{j,t-1}m_{j,t} = 1$, otherwise $m_{j,t-1}m_{j,t} = 0$, forcing $e^{\mathrm{ch}}_{j,t} = 0$.
Constraint \eqref{eq:batt_dyn} characterizes the dynamics of energy stored in the battery. Constraint \eqref{eq:batt_lev} maintains the energy in the battery within the limit. Constraint~\eqref{eq:daily_cyc} guarantees that the energy level at the end-of-day (EOD)  intervals matches a desired value. 
\subsection{Mixed-integer linear reformulation}
One approach to solving the fleet operator's problem is to formulate a joint optimization that integrates the C\&I user's problem for each user and the EVs' operational problem, resulting into a mixed-integer non-linear  program (MINLP). Since MINLPs are known to be computationally expensive, we formulate it into a mixed integer linear program (MILP). The bilinear constraints \eqref{alloc_mat:const_2}, \eqref{user:request}, and \eqref{ev:char} as well as the trilinear constraint \eqref{ev:discharge} can be linearized using standard McCormick reformulation. Specifically, we introduce a auxiliary variable $M_{j,t,i}^\prime = M_{j,t-1,i} \, M_{j,t,i}$, which can be linearized as
\begin{subequations}\label{mccor_linear}
\begin{align}
M_{j,t,i}^\prime & \geq  0,\\
M_{j,t,i}^\prime & \leq  M_{j,t-1,i},\\
M_{j,t,i}^\prime & \leq  M_{j,t,i},\\
M_{j,t,i}^\prime & \geq  M_{j,t-1,i} + M_{j,t,i} -1.
\end{align}
\end{subequations}

Substituting $M_{j,t,i}^\prime$ into \eqref{alloc_mat:const_2} and \eqref{user:request} leads to a linear constraints. For the trilinear constraint \eqref{ev:discharge}, substituting $M_{j,t,i}^\prime$ results in a bilinear term, which can further be linearized in a similar manner as \eqref{mccor_linear}. Likewise, defining another auxiliary variable $m'_{j,t} = m_{j,t-1}m_{j,t}$ and applying the same reformulation enables \eqref{ev:char} to be expressed linearly.

Constraint \eqref{ev:transit_ener} is non-convex. To obtain a convex form of this constraint, we relax it by replacing the equality ``='' with an inequality ``$\geq$''. Lemma~\ref{lem:transit_energy} establishes that this relaxation is exact, i.e., the inequality is binding at the optimal solution since any excess transit energy beyond the required value would only increase charging and depreciation cost.

\begin{lemma}[Tightness of transit-energy relaxation]
\label{lem:transit_energy}
For a given EV dispatch matrix $\mathbf{M}_{j}$, the relaxation of \eqref{ev:transit_ener} into an inequality is binding at optimality.
\end{lemma}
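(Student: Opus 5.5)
A standard exchange argument works here: assume an optimum where the relaxed inequality is strict, build a feasible point with $e^{\mathrm{tr}}_{j,t}$ lowered, and show the objective does not increase. For a fixed $\mathbf{M}_j$, the coordinates $(\mathbf{x}_j,\mathbf{y}_j)$ are determined by \eqref{ev:x}--\eqref{ev:y}. Hence the right-hand side of \eqref{ev:transit_ener},
\[
d_{j,t} := \frac{1}{\delta^{\mathrm{tr}}_j}\big\|(x_{j,t},y_{j,t})-(x_{j,t-1},y_{j,t-1})\big\|_1,
\]
is a constant. The relaxed constraint $e^{\mathrm{tr}}_{j,t}\ge d_{j,t}$ is then a simple lower bound on a continuous variable.

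Suppose an optimal solution of the relaxed problem has $e^{\mathrm{tr}}_{j,t^\ast} = d_{j,t^\ast}+\epsilon$ with $\epsilon>0$ for some $t^\ast$. Define a candidate that sets $e^{\mathrm{tr}}_{j,t^\ast}=d_{j,t^\ast}$ and keeps $\mathbf{M}$, $\mathbf{u}_i$ and $\mathbf{e}^{\mathrm{dis}}_j$ unchanged. The charging schedule also needs adjusting so that \eqref{eq:batt_dyn}--\eqref{eq:daily_cyc} still hold.

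The obstacle is the end-of-day equality \eqref{eq:daily_cyc}. If $\mathbf{e}^{\mathrm{ch}}_j$ is kept fixed, lowering $e^{\mathrm{tr}}_{j,t^\ast}$ raises $b_{j,t}$ by $\epsilon/\eta^{\mathrm{dis}}_j$ for all $t\ge t^\ast$. That can break $\overline{b}_j$ or $b^{\mathrm{EOD}}_j$. To fix this, reduce charging. Let $t'$ be the first charging interval with $e^{\mathrm{ch}}_{j,t'}>0$ that comes after $t^\ast$ and no later than the next EOD interval. Lower $e^{\mathrm{ch}}_{j,t'}$ by $\epsilon/(\eta^{\mathrm{ch}}_j\eta^{\mathrm{dis}}_j)$, splitting the reduction over several such intervals if one is too small. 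Such intervals must exist: the original solution returns to $b^{\mathrm{EOD}}_j$ after spending at least $\epsilon/\eta^{\mathrm{dis}}_j$ of surplus transit energy, so at least that much energy was recharged in between. After the reduction, $b_{j,t}$ exceeds its original value by $\epsilon/\eta^{\mathrm{dis}}_j$ only on $[t^\ast,t')$ and is unchanged from $t'$ on. It is therefore still equal to $b^{\mathrm{EOD}}_j$ at EOD and still at least $\underline{b}_j$ everywhere.

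The upper bound $\overline{b}_j$ on $[t^\ast,t')$ also needs checking. The main case follows from the structure of the dynamics. Between $t^\ast$ and $t'$ there is no charging, so $b_{j,t}$ is nonincreasing there. In the original solution, $b_{j,t^\ast}\le b_{j,t^\ast-1}-\epsilon/\eta^{\mathrm{dis}}_j$. So the raised levels on $[t^\ast,t')$ are at most $b_{j,t^\ast-1}\le\overline{b}_j$. If the \emph{last} charge was taken before $t^\ast$, reduce that earlier charging instead, which lowers the levels in between; since $b$ only decreases relative to the original there, none of the bounds can be violated.

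Finally, compare objectives. $\mathrm{DC}_i$ is unchanged. $\mathrm{Dep}$ falls by $\gamma_j\epsilon\ge 0$. The energy term $\pmb{\pi}_0^\intercal\sum_j\mathbf{e}^{\mathrm{ch}}_j$ weakly falls because prices are nonnegative. Each demand-charge max at the station weakly falls because total charging was only reduced. The new point is therefore feasible with objective no larger, and strictly smaller when $\gamma_j>0$. Strictness rules out $\epsilon>0$ at any optimum. When $\gamma_j=0$, the same construction shows there exists an optimum where the inequality binds, and then the relaxation is exact.
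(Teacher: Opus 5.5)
Your argument is the paper's exchange argument: remove the slack at $t^\ast$ and compensate by cutting charging at the nearest charging interval after $t^\ast$ (the paper's option (A), $\sigma=r$) or, failing that, before it (option (B), $\sigma=s$). You handle the forward case well, and more sharply than the paper does. Using the slack itself to get $b_{j,t^\ast}\le b_{j,t^\ast-1}-\epsilon/\eta^{\mathrm{dis}}_j$ shows the raised levels never exceed $b_{j,t^\ast-1}\le\overline{b}_j$. So (A) is feasible whenever a later charge in the day exists, which makes the paper's ``both exist'' and ``only $r$ exists'' cases unnecessary.

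The genuine gap is the backward case. Cutting $e^{\mathrm{ch}}_{j,s}$ at some $s<t^\ast$ lowers $b_{j,\tau}$ by $\epsilon/\eta^{\mathrm{dis}}_j$ on $[s,t^\ast-1]$. A lowered state of charge is exactly what can violate $\underline{b}_j$ in \eqref{eq:batt_lev}, so your sentence ``since $b$ only decreases relative to the original there, none of the bounds can be violated'' gets this backwards. What is missing is the paper's ``only $s$ exists'' estimate. Take $s$ to be the last charging interval of the same day; one exists by the daily balance, and it must lie after the previous end-of-day interval, or you break that reset. Then no charging occurs on $(s,t^{\mathrm{E}}]$, where $t^{\mathrm{E}}$ is the next end-of-day interval. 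Summing \eqref{eq:batt_dyn} from $k+1$ to $t^{\mathrm{E}}$ gives, for every $k\in[s,t^\ast-1]$, $b_{j,k}=b^{\mathrm{EOD}}_j+\frac{1}{\eta^{\mathrm{dis}}_j}\sum_{\tau=k+1}^{t^{\mathrm{E}}}\big(e^{\mathrm{dis}}_{j,\tau}+e^{\mathrm{tr}}_{j,\tau}\big)\ge b^{\mathrm{EOD}}_j+(d_{j,t^\ast}+\epsilon)/\eta^{\mathrm{dis}}_j$. The lowered level is therefore still at least $b^{\mathrm{EOD}}_j\ge\underline{b}_j$.

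Your claim that compensating intervals after $t^\ast$ ``must exist'' is also false. The end-of-day reset only forces at least $\epsilon/(\eta^{\mathrm{ch}}_j\eta^{\mathrm{dis}}_j)$ of charging somewhere in the day, not after $t^\ast$; the EV may charge to $\overline{b}_j$ in the morning and only drain afterwards. This leaves a mixed case uncovered: later charging exists but totals less than $\epsilon/(\eta^{\mathrm{ch}}_j\eta^{\mathrm{dis}}_j)$, and your splitting rule draws only on later intervals. The simplest repair is the paper's: use a step $\epsilon'\in(0,\epsilon]$ small enough to fit inside a single charging interval, since the strict decrease $\gamma_j\epsilon'$ already contradicts optimality. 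If you want to remove the whole slack at once, as your $\gamma_j=0$ remark requires, zero out all later charges first and take the remainder from the latest earlier ones. In the modified schedule, every point whose state of charge was lowered is followed by no charging until $t^{\mathrm{E}}$, where the state of charge is still $b^{\mathrm{EOD}}_j$, so it stays at least $b^{\mathrm{EOD}}_j\ge\underline{b}_j$.

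Finally, $b_{j,t^\ast}\le b_{j,t^\ast-1}-\epsilon/\eta^{\mathrm{dis}}_j$ presumes $e^{\mathrm{ch}}_{j,t^\ast}=0$, which is not automatic in the relaxed problem. If $d_{j,t^\ast}=0$, the EV may be parked at the station with $e^{\mathrm{tr}}_{j,t^\ast}>0$ and still charge at $t^\ast$. In that case cut the charge at $t^\ast$ itself, which leaves $\mathbf{b}_j$ unchanged. The paper's inference that $e^{\mathrm{tr}*}_{j,t}>0$ forces a change of location glosses over the same case.
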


\begin{proof}
The proof is provided in Appendix~\ref{app:proof_lemma1}.
\end{proof}

Even after reformulating the problem as an MILP which is NP-hard in general, and its computational complexity grows exponentially with the number of users and EVs. Consequently, conventional methods such as the branch-and-cut algorithm do not scale efficiently for large instances.

\section{Efficient Algorithm} \label{sec:HA}
We propose an algorithm to obtain a high-quality solution to the MILP defined in Section \ref{sec:FRM} in polynomial time. The algorithm follows a simple strategy that, at each iteration, dispatches the ``least used'' EV to the ``most-valued'' user's location when feasible and profitable without considering future iterations. 

Specifically, a single EV is dispatched to a single user for one or more intervals during each iteration based on the value these services provide. It is possible for a user previously served from an EV to be served again with a new EV for a new set of service intervals in the subsequent iterations. The algorithm terminates when no user yields a positive value for any set of service intervals.


The algorithm takes the energy consumption measurements of each user $\left(\pmb{\ell}_{i}\right)_{i \in \mathcal{I}}$ as the input, and outputs the dispatch matrices for each EV $\left(\mathbf{M}_{j}^{\mathrm{alg}}\right)_{j \in \mathcal{J}}$ and the variable that correspond to the service intervals for each user $\left(\mathbf{u}_{i}^{\mathrm{alg}}\right)_{i \in \mathcal{I}}$. For notational simplicity, we denote the collection of dispatch matrices as $\mathbf{M}^{\mathrm{alg}} := \left(\mathbf{M}_{j}^{\mathrm{alg}}\right)_{j \in \mathcal{J}}$ and the collection of the variables corresponding to service intervals as $\mathbf{u} ^{\mathrm{alg}}:= \left(\mathbf{u}_{i}\right)_{i \in \mathcal{I}}$. Using the dispatch matrices $\mathbf{M}^{\mathrm{alg}}$, we can obtain EVs charging schedules $\left(\mathbf{e}_{j}^{\mathrm{alg}}\right)_{j \in \mathcal{J}}$ and the corresponding state of charge $\left(\mathbf{b}_{j}^{\mathrm{alg}}\right)_{j \in \mathcal{J}}$ using \eqref{ev:charging_cost}. The pseudocode of the algorithm is shown in Algorithm~\ref{MatchEV2User}.

\begin{algorithm}[h]
\caption{Algorithm for EV dispatch and user service scheduling}
\label{MatchEV2User}
\SetAlgoLined
\KwInput{User load profiles $\pmb{\ell}_{i}$ for $i \in \mathcal{I}$}
\KwOutput{EV dispatch matrices $\mathbf{M}^{\mathrm{alg}}$, and the variables corresponding to service intervals $\mathbf{u}^{\mathrm{alg}}$}

Initialize $\mathbf{M}_{j}^{\mathrm{alg}} = \mathbf{0}$ for $j \in \mathcal{J}$ and $\mathbf{u}_{i}^{\mathrm{alg}} = \mathbf{0}$ for $i \in \mathcal{I}$\;
Compute initial priority values $V_{i,s}$ for all $i \in \mathcal{I}$ and $s \in \mathcal{S}$ using~\eqref{pri_val_users}\;

\While{any $V_{i,s} > 0$}{
    Identify the most valued user $i_\star$ and corresponding number of services $s_\star$ using~\eqref{most_val_user}\;
    Determine the service intervals $\mathcal{T}^{\mathrm{ser}}$ and corresponding transit intervals $\mathcal{T}^{\mathrm{tr}}$\;
    Determine the set of available EVs $\mathcal{J}_{\mathrm{a}}$ that are at the charging station during $\mathcal{T}^{\mathrm{ser}} \cup \mathcal{T}^{\mathrm{tr}}$\;

    \eIf{$\mathcal{J}_{\mathrm{a}} \neq \emptyset$}{
        Select the least used EV $j_\star$ from $\mathcal{J}_{\mathrm{a}}$ using~\eqref{least_utilized}\;

        \eIf{$s_\star V_{i_\star,s_\star} > \delta \mathrm{CC}$}{
            Update $M^{\mathrm{alg}}_{j_\star,t,i_\star} \gets 1$ for $t \in \mathcal{T}^{\mathrm{ser}} \cup \mathcal{T}^{\mathrm{tr}}$\;
            Update $u_{i_\star,t}^{\mathrm{alg}} \gets 1$ for $t \in \mathcal{T}^{\mathrm{ser}}$\;
            Recompute $V_{i_\star,s}$ for $s \in \mathcal{S}$ using~\eqref{pri_val_users}\;
        }{
            Set $V_{i_\star,s_\star} \gets 0$\;
        }
    }{
        Set $V_{i_\star,s} \gets 0$ for $s_\star \le s \le S$\;
        Recompute $V_{i_\star,s}$ for $1 \le s < s_\star$ using~\eqref{pri_val_users}\;
    }
}
\Return{$\mathbf{M}^{\mathrm{alg}}, \mathbf{u}^{\mathrm{alg}}$}\;
\end{algorithm}

Now, we explain the algorithm in detail. We first initialize the output variables $ \left(\mathbf{M}_{j}^{\mathrm{alg}} \right)_{j \in \mathcal{J}}$ and $ \left(\mathbf{u}_{i}^{\mathrm{alg}} \right)_{i \in \mathcal{I}}$ with zeros. This implies that initially, no EV is dispatched to any user's location, and no service intervals are designated for any user. Next, we discuss the three major steps of the algorithm, which are detailed as follows:

\textit{(a) Most-valued user selection:} 
In this step, we identify the user who offer the maximum demand charge reduction. To do so, we assign each user a priority-value, computed across 1 to $S$ services, based on their mean demand charge reduction per service. Formally, this value is given as 
\begin{align}\label{pri_val_users}
V_{i,s} = \left(\mathrm{DC}_{i} - \mathrm{DC}_{i}^{\mathrm{R}}(s)\right)/s \,\, \mathrm{for} \,\, s \in \mathcal{S} = \{1,\cdots, S\}.
\end{align}
Here, $\mathrm{DC}_{i} = \sum_{\theta \in \Theta} \tilde{\pi}_{i}^{\theta} \max_{t\in \mathcal{T}^{\theta}} \left(\ell_{i,t} - \overline{p}_{i}\Delta u^{\mathrm{alg}}_{i,t} \right)$ represents the demand charge with already assigned service intervals corresponding to $\mathbf{u}_{i}^{\mathrm{alg}}$ whereas  $\mathrm{DC}_{i}^{\mathrm{R}}(s)$ represents the reduced demand charge when $s$ additional services are provided to the user.  These additional service intervals corresponds to the variable $\mathbf{u}_{i}^{+}$. The reduced demand charge $\mathrm{DC}_{i}^{\mathrm{R}}$ can thus be obtained from the following optimization problem:
\begin{subequations}\label{prb:user_mod}
\begin{align}
\mathrm{DC}_{i}^{\mathrm{R}}(s) = & \min_{\mathbf{u}^{+}_{i}} \,\, \sum_{\theta \in \Theta} \tilde{\pi}_{i}^{\theta} \max_{t\in \mathcal{T}^{\theta}} \left(\ell_{i,t} - \overline{p}_{i}\Delta \left(u^{\mathrm{alg}}_{i,t} + u^{+}_{i,t} \right)\right)\\
& \textrm{s.t.} \quad \mathbf{u}^{+}_{i} \in \{0,1\}^{T}, \label{bin_var}\\
& \qquad u^{+}_{i,t} = 0, \,  t \in \mathcal{O}_{i},\label{ev_occupancy}\\
& \qquad \mathbf{1}^\intercal \mathbf{u}_{i}^{+} \leq s,\label{add_ser}
\end{align}
\end{subequations}
where $\mathcal{O}_{i} = \{t \in \mathcal{T}: u_{i,t}^{\mathrm{alg}} = 1 \mbox{ or } \sum_{j \in \mathcal{J}, i \in \mathcal{I}}M_{j,t,i}^{\mathrm{alg}} = J \mbox{ or } \ell_{i,t} \leq \overline{p}_{i}\Delta  \}$.
Constraint~\eqref{bin_var} indicates that $\mathbf{u}_{i}^{+}$ is a binary variable corresponding to additional service intervals. 
Constraint~\eqref{ev_occupancy} prohibits service during any interval that satisfies one or more of the following conditions: (i) the user has already been served  $\left( u_{i,t}^{\mathrm{alg}} = 1 \right) $, (ii) all the EVs are fully occupied $\left( \sum_{j \in \mathcal{J}, i \in \mathcal{I}} M_{j,t,i}^{\mathrm{alg}} = J \right)$, or (iii) the user’s energy consumption is lower than the energy that would be delivered by an EV operating at the charger’s maximum power capacity $\left( \ell_{i,t} \leq \overline{p}_{i}\Delta  \right)$, thereby avoiding any return of energy to the grid. Constraint~\eqref{add_ser} limits the total number of services. The constraint matrix of \eqref{prb:user_mod} is totally unimodular with an integral right-hand side, making the feasible region an integral polyhedron. Thus, the LP relaxation yields an integer solution without explicitly enforcing Constraint~\eqref{bin_var}, which can be replaced by convex constraint $\mathbf{0} \leq \mathbf{u}_{i}^{+} \leq \mathbf{1}$.

Based on the priority-values $\mathbf{V}$, we select the most-valued user and the number of services as \begin{align}\label{most_val_user}
\left( i_{\star}, s_{\star} \right) = \mbox{arg} \max_{i \in \mathcal {I},s\in \mathcal {S}} V_{i,s}
\end{align} such that $V_{i_\star,s_\star} >0$. The corresponding service intervals can then be obtained as $\mathcal{T}^{\mathrm{ser}} = \left\{ t \in \mathcal{T} : u_{i_\star,t}^{+\star}
= 1 \right\}$ where $u_{i_\star,t}^{+\star}$ is an optimal solution of \eqref{prb:user_mod} when solving for user $i_\star$ for $s_\star$ services.  

It is possible that there might be more than one combination of $(i_\star,s_\star)$  which corresponds to the maximum priority-value. In those cases, we break the tie as follows: (i) the user with the maximum number of services will be given the priority. This is because more number of services correspond to more demand charge reduction for the same mean demand charge reduction. (ii) If the ties still persist, the user closest to the charging station will be given the priority. This ensures the energy spend for the transit is the lowest.

\textit{(b) Least used available EV:} 
In this step, we identify the EV that will deliver services to the most-valued user $i_\star$ during the service intervals $\mathcal{T}^{\mathrm{ser}}$. To do so, we first determine the corresponding transit intervals $\mathcal{T}^{\mathrm{tr}}$, defined as the intervals immediately preceding the service intervals, i.e., $\mathcal{T}^{\mathrm{tr}} = \{t-1: t\in \mathcal{T}^{\mathrm{ser}} \}$. We then obtain the set of available EVs $\mathcal{J}_{\mathrm{a}}$ that are located at the charging station during both the service and transit intervals, based on their spatial coordinates $\left(\mathbf{x}_{j}, \mathbf{y}_{j} \right)$ derived from \eqref{ev:x} and \eqref{ev:y}. Mathematically, it can be given as
\begin{align}\label{avail_evs}
\mathcal{J}_{\mathrm{a}} = \{ j \in \mathcal{J}: \left(x_{j,t},y_{j,t}\right) = \left(x_{0}, y_{0} \right) \textrm{ for } t \in \mathcal{T}^{\mathrm{ser}} \cup \mathcal{T}^{\mathrm{tr}}  \}.
\end{align}

We then select the least-used EV $j_\star$ among the available EVs $\mathcal{J}_{\mathrm{a}}$, defined as the vehicle with the lowest total energy expenditure across discharging and transit. Mathematically, it can be given as
\begin{align}\label{least_utilized}
j_\star = \mathrm{arg} \min_{j \in \mathcal{J}_{\mathrm{a}}}\, 
(\mathbf{e}_{j}^{\mathrm{dis}} + \mathbf{e}_{j}^{\mathrm{tr}})^{\intercal}\mathbf{1}.
\end{align}
This selection promotes a more balanced distribution of energy usage across the fleet, thereby increasing the likelihood of minimizing the demand charge at the charging station. If there are multiple EVs satisfy the condition, we select one randomly to break the tie.

It is possible that the set of available EVs is empty. It implies that none of the EVs can single-handedly serve the most-valued user for the required service intervals. We then update the corresponding priority-value with zero i.e. $V_{i_\star, s_\star} = 0$. More importantly, this also implies that if $s_\star$ services are infeasible for the user, services more than $s_{\star}$ are also infeasible. Thus, we update the corresponding priority values with zeros\footnote{Although infeasible service intervals for the most-valued user are assigned zero priority in the current iteration, this infeasibility only rules out serving these intervals with a single EV. Subsets of these intervals may become feasible in later iterations when served by different EVs.}, i.e.,  $V_{ i_{\star},s} = 0$ for $s_\star \leq s \leq S$. We then update the priority values $V_{i_\star,s}$ for $1 \leq s < s_\star$ using \eqref{pri_val_users}. The algorithm subsequently proceeds to select the next most valued user using \eqref{most_val_user} and to identify the corresponding least used available EV using \eqref{avail_evs} and \eqref{least_utilized}.
 
\textit{(c) Marginal value-based dispatch: } In this step, we dispatch the least used available EV~$j_\star$ at the most-valued user~$i_\star$'s location during the service intervals $\mathcal{T}^{\mathrm{ser}}$ and the corresponding transit intervals $\mathcal{T}^{\mathrm{tr}}$ if the marginal demand charge reduction $s_\star V_{i_\star,s_\star}$ exceeds the corresponding increase in charging cost $\delta \mathrm{CC}$. The increase in the charging cost is computed as \[ \delta \mathrm{CC} = \mathrm{CC}\left(\mathbf{M}^{\mathrm{alg}+}\right) - \mathrm{CC}\left(\mathbf{M}^{\mathrm{alg}}\right)\] where $\mathrm{CC}(\cdot)$ can be obtained using \eqref{ev:charging_cost}. Here, $\mathbf{M}^{\mathrm{alg}+}$ represents the set of dispatch matrices after tentatively dispatching EV $j_\star$ at the user $i_\star$'s location. Specifically, all other EVs retain their dispatch matrices, i.e., $\mathbf{M}_{j}^{\mathrm{alg}+} = \mathbf{M}_{j}^{\mathrm{alg}}$ for $j \neq j_\star$. For the selected EV~$j_\star$, $\mathbf{M}_{j_\star}^{\mathrm{alg}+} = \mathbf{M}_{j_\star}^{\mathrm{alg}}$ except during the intervals $\mathcal{T}^{\mathrm{ser}}$ and $\mathcal{T}^{\mathrm{tr}}$, where $M_{j_{\star}, t, i_{\star}}^{\mathrm{alg}+} = 1$. 

In other words, if $s_\star V_{i_\star,s_\star} > \delta\mathrm{CC}$, we update the dispatch matrix for EV $j_\star$ with $M_{j_\star,t,i_\star}^{\mathrm{alg}} = 1$ for $t \in \mathcal{T}^{\mathrm{ser}} \cup \mathcal{T}^{\mathrm{tr}}$ and the variable corresponding to the most-valued user $i_\star$'s service intervals $u_{i_\star,t}^{\mathrm{alg}} = 1$ for $t \in \mathcal{T}^{\mathrm{ser}}$. We then update the priority-values for user $i_\star$ using \eqref{pri_val_users} with the updated dispatch matrices $\mathbf{M}^{\mathrm{alg}}$ and the updated service intervals represented by $\mathbf{u}_{i_\star}^{\mathrm{alg}}$. Conversely, if $s_\star V_{i_\star,s_\star} \leq \delta \mathrm{CC}$, the corresponding priority value is set to zero, $V_{i_\star,s_\star} = 0$. The algorithm subsequently proceeds to select the next most valued user using \eqref{most_val_user} and to identify the corresponding least used available EV using \eqref{avail_evs} and \eqref{least_utilized}.

\textit{Stopping criterion:} The algorithm repeats Steps (a) to (c) iteratively, updating the priority values after each iteration. The process terminates when all priority values become zero, that is, 
$V_{i,s} = 0$ for all $i \in \mathcal{I}$ and $s \in \mathcal{S}$ indicating that no further user and EV pair can yield a positive marginal demand charge reduction. The priority values in Step~(a) are updated only for users affected by the most recent dispatch, such as the most valued user $i_\star$ just served or those with modified feasible intervals. Updating only these users significantly reduces computation time without affecting the algorithm's performance.

\textit{Computational complexity:} Rejected or infeasible candidate pairs $(i,s)$ are permanently removed once $V_{i,s}$ is set to zero, yielding at most $\mathcal{O}(IS)$ iterations. Successful dispatches may revisit a user after priority recomputation, but each dispatch assigns at least one previously unserved interval, limiting their number to $\mathcal{O}(IT)$. Thus, the outer loop executes at most $\mathcal{O}(IS+IT)$ iterations. Each iteration requires $\mathcal{O}(IS)$ time to identify the highest-priority candidate pair, $\mathcal{O}(JT)$ time to verify EV availability across the scheduling horizon, and $\mathcal{O}(J)$ time to select the least-utilized feasible EV. In addition, priority updates require solving up to $S$ instances of \eqref{prb:user_mod}, each solvable in $\mathrm{poly}(T)$ time due to total uni-modularity of the constraint matrix. Therefore, the per-iteration complexity is $\mathcal{O}(IS+JT+J+S\mathrm{poly}(T))$. Therefore, the worst-case complexity is
$\mathcal{O} \left((IS+IT)\left(IS+JT+J+S\mathrm{poly}(T)\right)\right).$

In each iteration, identifying the highest-priority pair requires $\mathcal{O}(IS)$ time, checking EV availability requires $\mathcal{O}(JT)$ time, and selecting the least-used EV requires $\mathcal{O}(J)$ time. Updating priorities requires solving up to $S$ instances of \eqref{prb:user_mod}, each solvable in $\mathrm{poly}(T)$ time due to total unimodularity of the constraint matrix, giving an update cost of $S\mathrm{poly}(T)$. Therefore, the worst-case complexity of the heuristic is
$\mathcal{O}\left((IS+IT)\left(IS+JT+J+S\mathrm{poly}(T)\right)\right)$.

\textit{Theoretical performance:} The proposed algorithm is a myopic heuristic and no universal approximation guarantee is claimed. Decisions are made sequentially and are not revisited, so locally beneficial assignments may preclude more valuable future opportunities. Moreover, deriving a formal approximation guarantee is challenging because the problem does not admit a simple separable and additive structure. The demand-charge objective involves a maximum operator, and assignment values are coupled through EV availability, SOC dynamics, charging decisions, and transit requirements. Therefore, although the proposed heuristic demonstrates near-optimal performance empirically, no general theoretical approximation bound is established.

\section{Case Study}\label{sec:CS}

\subsection{Study inputs and computational setup}

\subsubsection{Data description} We use smart meter data from 730 business buildings across 24 zip codes in San Francisco, California. Electricity consumption is recorded at 15-minute intervals over 27 months, from May 2014 to July 2016. The dataset covers nearly the entire city, excluding Treasure Island, Presidio, and Mission Bay. All buildings are served electricity by PG\&E Company. 

\subsubsection{User tariff structure} We classify building users under PG\&E’s tariff structure. PG\&E evaluates tariff eligibility on a monthly basis. In this study, however, we focus on users with an average monthly peak demand above 75 kW, for whom demand rates apply. A total of 136 users satisfy this criterion. These users are then assigned to either Schedule B-10 or B-19. Users with average peak demand exceeding 499 kW are placed in Schedule B-19, and the rest in Schedule B-10. This results in 78 B-10 users and 58 B-19 users \cite{pge2025}.

Schedule B-10 applies a flat demand rate of \$22.16/kW. There is no TOU differentiation. Schedule B-19 applies a base demand rate of \$39.22/kW. Additional increments depend on both TOU period and season. In summer (June--September), the incremental rates are \$54.17/kW during peak hours (4--9 pm) and \$11.75/kW during partial-peak hours (2--4 pm and 9--11 pm). In winter (October--May), the incremental rate is \$3.20/kW during peak hours (4--9 pm).

In addition to tariff classification, we also account for the spatial distribution of the users. The building locations of these users are available at the zip code level, with synthetic locations randomly assigned within the boundary of each zip code. Consequently, the estimated travel distances and transit energy costs should be interpreted as approximations. When higher-resolution building-location and road-network data are available, they can be directly incorporated into the analysis. The charging station is located at the geometric center of the buildings, computed as the mean of their coordinates, as illustrated in Fig.~\ref{location}. The geographical coordinates (latitude and longitude) of the users and the charging station are projected onto a local Cartesian plane by converting them into \textit{x-y} coordinates using the Haversine formula~\cite{haversine_formula_wikipedia}, enabling subsequent distance calculations.
\begin{figure}[htbp]
\centering
\includegraphics[width=0.5\textwidth]{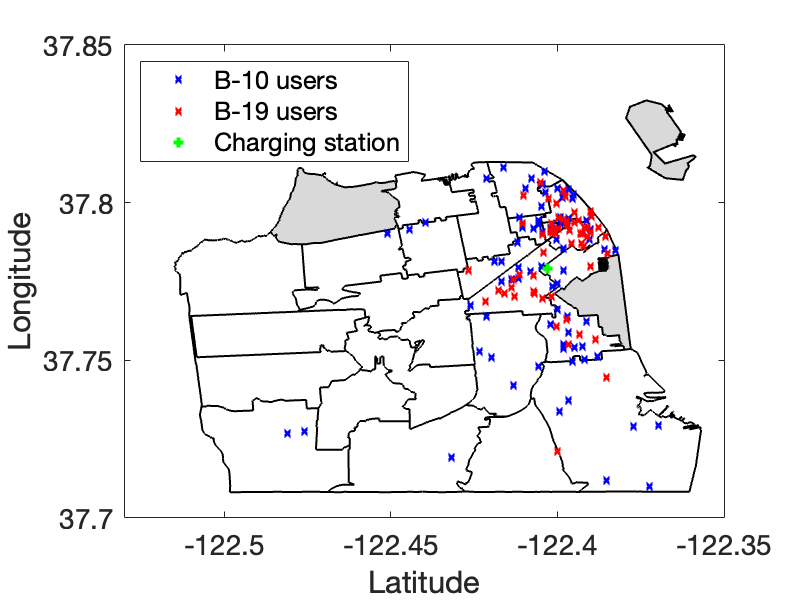}
\caption{Map with user locations and charging station.}
\label{location}
\end{figure}

\subsubsection{EV parameters and charging tariffs} We consider Nissan Leaf as an representative EV example for this business model. For each EV $j \in \mathcal{J}$, the parameters are listed in Table~\ref{tab:ev_parameters}. As all EVs are considered to charge exclusively at the charging station, we adopt Schedule BEV for its tariff structure \cite{pge2025}. This tariff does not incorporate either TOU or seasonal differentiation in its demand rates. The demand rate, referred to as the subscription charge, is \$1.24/kW. The energy rates, however, are differentiated by TOU period: \$0.38/kWh for peak hours (4-9 pm), \$0.19/kWh for off-peak hours (9 pm-9 am and 2-4 pm), and \$0.16/kWh for super off-peak hours (9 am-2 pm).

\begin{table}[htbp]
\centering
\caption{Nissan Leaf (S) parameters \cite{nissan_leaf_buildprice, raustad2014electric}}
\label{tab:ev_parameters}
\begin{tabular}{l c c}
\hline
\textbf{Parameter} & \textbf{Symbol} & \textbf{Value} \\
\hline
Maximum SOC limit of EV $j$     & $\overline{b}_{j}$ & 30 kWh \\
Minimum SOC limit of EV $j$     & $\underline{b}_{j}$ & 10 kWh \\
End-of-day SOC of EV $j$ & $b_{j}^{\mathrm{EOD}}$ & 20 kWh\tablefootnote{We set the end-of-day SOC below the maximum limit to reduce demand charge impacts at the charging station. Because services are unlikely during early morning hours, this choice provides additional charging flexibility while still allowing the EV to reach full SOC before service provision.} \\
Charging efficiency of EV $j$ & $\eta_{j}^{\mathrm{ch}}$ & 87\% \\
Discharging efficiency of EV $j$ & $\eta_{j}^{\mathrm{dis}}$ & 87\% \\
Range efficiency of EV $j$ & $\delta^{\mathrm{tr}}_{j}$ & 3.5 miles/kWh\\
\multicolumn{3}{l}{Depreciation rate of EV $j$}            \\
\,\,\, Usage-based & $\gamma_{j}$ & \$ 0.74/kWh\\
\,\,\, Age-based & & \$189/month\\
\hline
\end{tabular}
\end{table}



\subsubsection{Estimation of labor cost} Labor costs are estimated by analyzing the temporal distribution of user sub-peaks events, where sub-peaks denote the $R$ highest load intervals in a month for each user. We choose $R = 10$, to reflect the expectation that peak-shaving events for an individual user are rare, with fewer than ten service intervals anticipated per month. For this choice of $R$, we aggregate peak occurrences across all users and map them onto a time-of-day and day-of-week representation. Averaging this profile across weeks yields a heat map as shown in Figure~\ref{ev_earn_mult_evs} that reflects the expected number of users experiencing peak events at each time period. Since the majority of services are scheduled during weekday business hours (8 am to 6 pm), each EV is assumed to be operated by a dedicated driver working ten hours per weekday. Accordingly, a monthly labor cost of \$4,180 per driver (based on \$19 per hour) is incorporated into the cost structure, consistent with average private driver wages in San Francisco~\cite{ziprecruiter2025driverSF}. This assumption provides a simplified and conservative baseline for operational cost estimation. In practical deployments, alternative operational strategies such as shared-driver scheduling, centralized fleet coordination, and optimized routing may reduce the effective labor cost per EV.

\begin{figure}[h]
\centering
\includegraphics[width=0.48\textwidth]{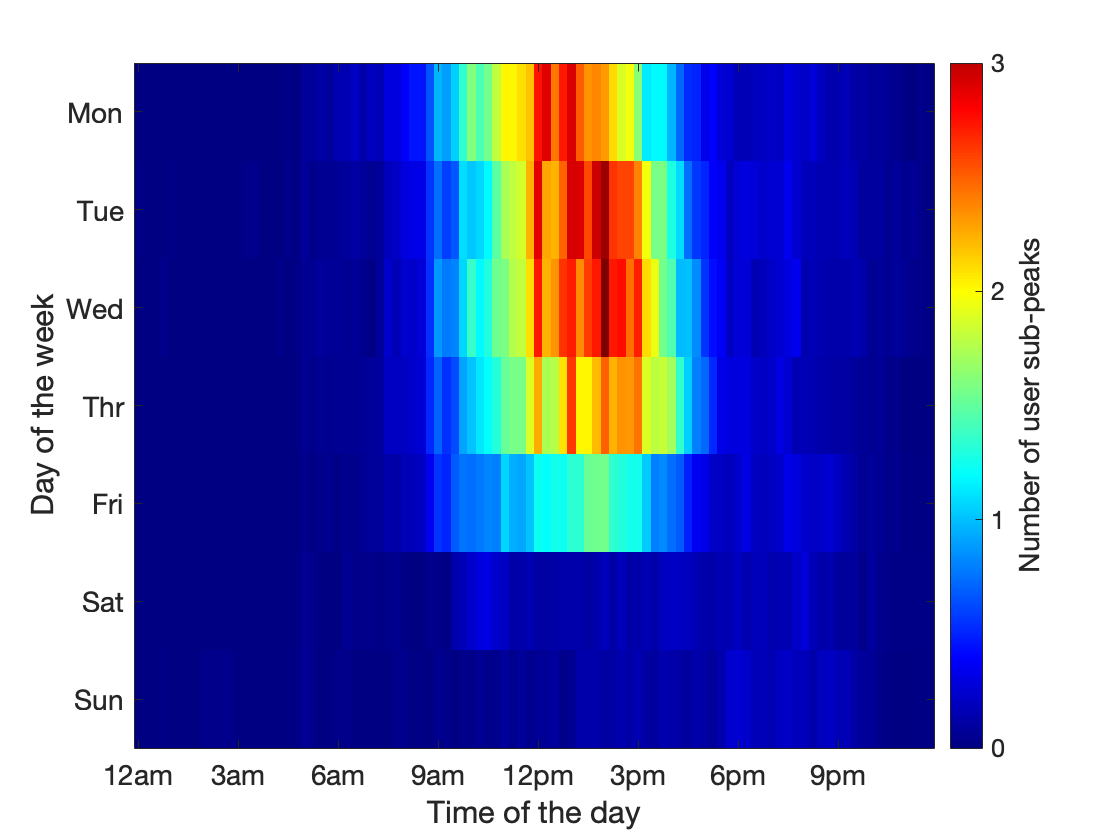}
\caption{Temporal distribution of user sub-peaks with $R=10$.}
\label{ev_earn_mult_evs}
\end{figure} 

\subsubsection{Computational setup} We use a high-performance computing system equipped with an AMD 7985WX processor (64 cores, 128 threads, 3.2–5.1 GHz) and 512 GB of RAM.

\subsection{Single EV scenario}

We begin with the case where a single EV provides services to all users equipped with AC Level-2 chargers installed at their locations. Each service is delivered at the charger’s rated capacity of $\overline{p}_{i} = 15~\mathrm{kW}$ per interval~\cite{steward2017critical}.  We solve the MILP, which requires an average of 96 minutes for each monthly instance, whereas the proposed algorithm with $S=10$ obtained a solution in under~5 minutes. The mean sub-optimality gap between the optimal value and the value obtained from proposed algorithm is 4.8\%.
Since the algorithm yields a feasible solution that may be suboptimal, with a small sub-optimality gap, all results presented in the subsequent sections are obtained using the algorithm.

To evaluate the profitability of the business model, we define the net savings as the total demand charge reduction across all users after accounting for EV operational cost, age-based depreciation, charger costs, and labor costs. The amortized monthly cost associated with each AC Level-2 charger is \$19.17~\cite{smith2015costs}. Figure~\ref{profitMonths} presents the net savings, which demonstrate the profitability of the business model and the close performance of the proposed algorithm relative to the optimal solution. The results also reveal a clear seasonal pattern, with average monthly summer savings of \$34,111 (highlighted in yellow), nearly double the winter monthly average of \$18,025 (highlighted in cyan).


 \begin{figure}[h]
\centering
\includegraphics[width=0.5\textwidth]{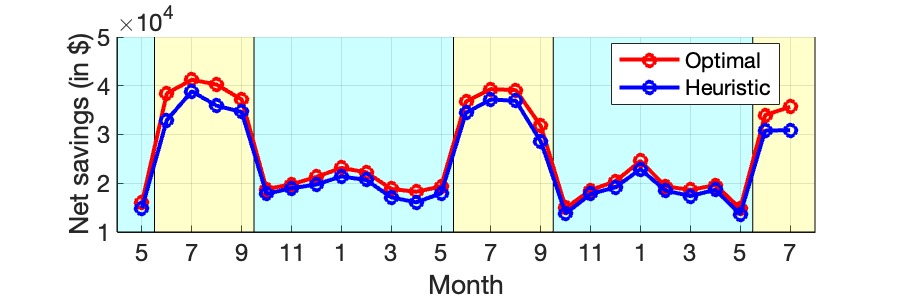}
\caption{Net savings with a single EV.}
\label{profitMonths}
\end{figure} 

To further understand the seasonal variation in net savings, we first examine the number of services provided to each user type, as shown in Figure~\ref{fig:dist_prov_ints}. B-10 users receive a median of 93 services in winter and 79 in summer, while B-19 users receive 80 and 99, respectively. The total number of services across both seasons is comparable (around 175), suggesting that the seasonal difference in net savings arises from variations in the magnitude and timing of demand reductions rather than service frequency. 
\begin{figure}
    \centering
    \includegraphics[width=0.45\textwidth]{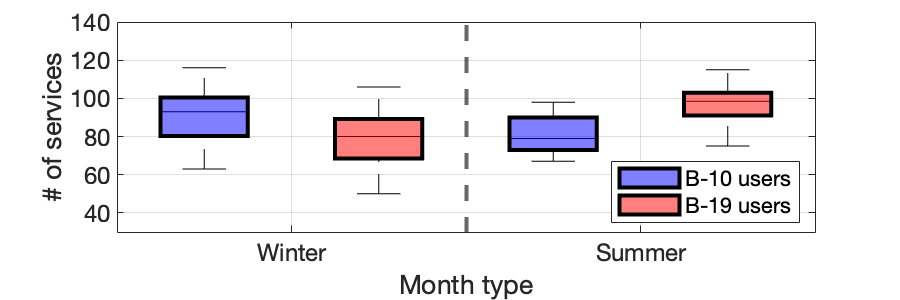}
    \caption{Service frequency by user type.}
    \label{fig:dist_prov_ints}
\end{figure}
As shown in Figure~\ref{fig:dis_dem_red_tou}(a), the mean demand reduction per service for B-10 users is 2.57~kW in winter and 2.25~kW in summer, and for B-19 users, 6.54~kW in winter and 4.62~kW in summer. The lower summer demand reduction relative to its winter counterpart for B-19 users may seem counterintuitive given the higher number of services. However, this outcome reflects the tariff structure, where reductions during peak and partial peak hours contribute more substantially to overall savings (see Figure~\ref{fig:dis_dem_red_tou}(b)). 
\begin{figure}[htbp]
    \centering
    \begin{subfigure}{0.45\textwidth}
        \centering
        \includegraphics[width=\linewidth]{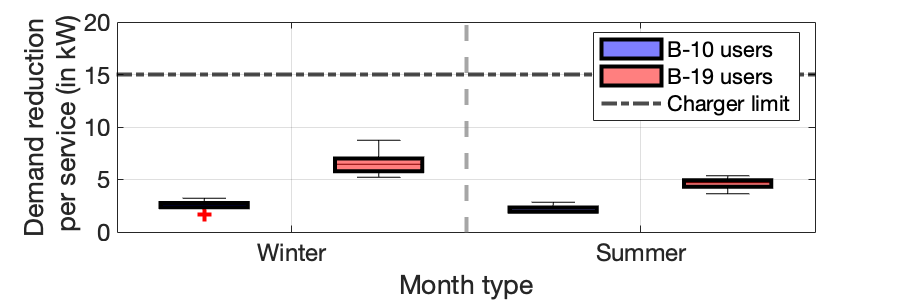}
        \caption{Peak demand reduction per service.}
        \label{fig:dem_red}
    \end{subfigure}
    \begin{subfigure}{0.45\textwidth}
        \centering
        \includegraphics[width=\linewidth]{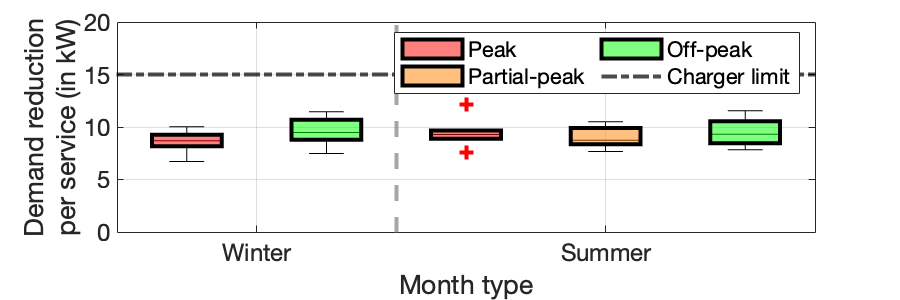}
        \caption{Peak demand reduction per service for B-19 user by time periods.}
        \label{fig:dem_red_tou}
    \end{subfigure}
    \caption{Peak demand reduction per service by season and user type.}
    \label{fig:dis_dem_red_tou}
\end{figure}
This effect is evident in Figure~\ref{fig:dem_charge_red}, where B-10 users achieve mean monthly demand charge reductions of \$66.05 in winter and \$51.33 in summer, while B-19 users achieve \$355.03 and \$652.67, respectively. After accounting for the monthly amortized charger cost of \$19.17, B-10 users can allocate up to \$39.31 per service in winter and \$31.75 in summer, whereas B-19 users can allocate up to \$243.50 and \$371.14 per service, respectively.
\begin{figure}
    \centering
    \includegraphics[width=0.45\textwidth]{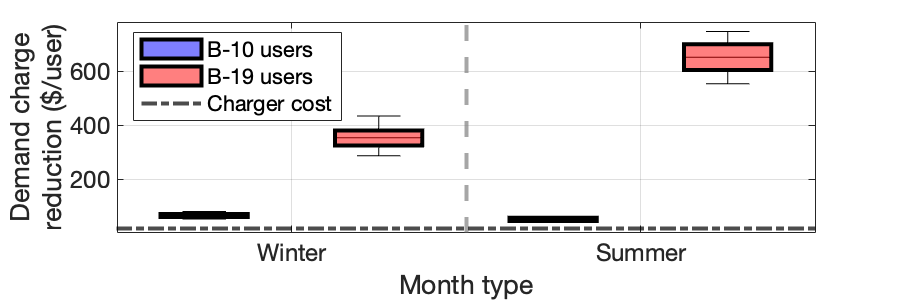}
    \caption{Demand charge reduction by season and user type.}
    \label{fig:dem_charge_red}
\end{figure}

On the cost side, the average monthly EV operating cost is \$759.26, including \$18.24 for demand charges, \$180.68 for energy charges, and \$560.34 for usage-based depreciation, with no significant seasonal variation. Adding the amortized charger cost of \$19.17, the age-based depreciation of \$189, and the labor cost of \$4180, the monthly cost to the fleet operator is \$5147.43. With an average of 175 services per month, the break-even price corresponds to \$29.41 per service.

\subsubsection*{Impact of infrastructure configuration} Apart from the current setup (All-AC), we consider two additional setups: (a) All-DC, where every user is equipped with a 30 kW DC fast charger with an amortized monthly cost of approximately \$133.33 \cite{steward2017critical,smith2015costs}, and (b) Tiered, where B-10 users have AC Level-2 chargers and B-19 users have DC fast chargers. In both setups, the charging station is equipped with a DC fast charger. The resulting net savings are presented in Figure~\ref{savings_setups}. Among the three configurations, the Tiered setup yields the highest savings across both seasons. Between All-DC and All-AC, the latter provides higher savings in winter, whereas the former is more beneficial during summer.

 \begin{figure}[h]
\centering
\includegraphics[width=0.5\textwidth]{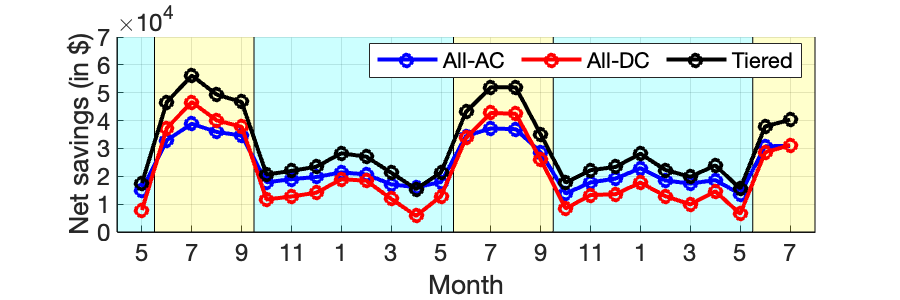}
\caption{Net savings under different setups.}
\label{savings_setups}
\end{figure} 

To further understand net savings across setups, we examine the demand charge reductions shown in Figures~\ref{dcr_season}(a) and~\ref{dcr_season}(b). In the All-DC setup, B-10 users achieve reductions comparable to All-AC, which are insufficient to offset the higher charger cost, whereas B-19 users gain significantly, with reductions about 1.5 times higher than in All-AC during both seasons. The Tiered setup combines the strengths of both, yielding the highest overall savings. On the cost side, the average monthly EV operating expense (see Figure~\ref{avg_month_ev_cost}) increases by about \$500 in the All-DC and Tiered setups compared to All-AC. But this can be offset by the savings from just two to three B-19 users. As All-DC remains impractical for B-10 users, we focus on the Tiered setup. During winter, the EV provides a median of 88 services to B-10 users and 77 to B-19 users, and in summer, 74 and 95, respectively. To break even, the fleet operator must recover an average monthly cost of \$5,523.82 across seasons. Since each service provided to a B-19 user involves twice the energy delivery of a service provided to a B-10 user, we adopt a differentiated per-service pricing structure. Under this scheme, the break-even price is \$22.82 per service for B-10 users and \$45.64 per service for B-19 users during winter, and \$20.92 and \$41.84 per service, respectively, during summer.


\begin{figure}[htbp]
    \centering
    \begin{subfigure}{0.48\textwidth}
        \centering
        \includegraphics[width=\linewidth]{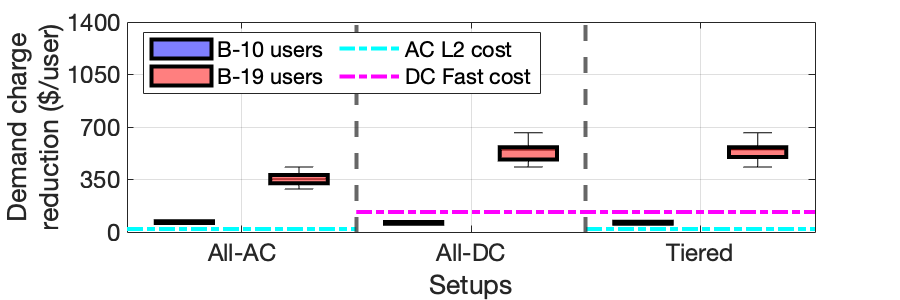}
        \caption{Demand charge reduction during winter months.}
    \end{subfigure}
    \hfill
    \begin{subfigure}{0.48\textwidth}
        \centering
        \includegraphics[width=\linewidth]{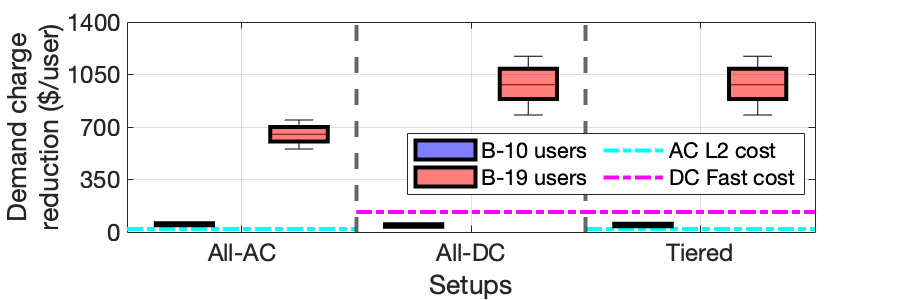}
        \caption{Demand charge reduction during summer months.}
    \end{subfigure}
    \caption{Demand charge reduction under different infrastructure configuration by season and user type.}
    \label{dcr_season}
\end{figure}

 \begin{figure}[h]
\centering
\includegraphics[width=0.45\textwidth]{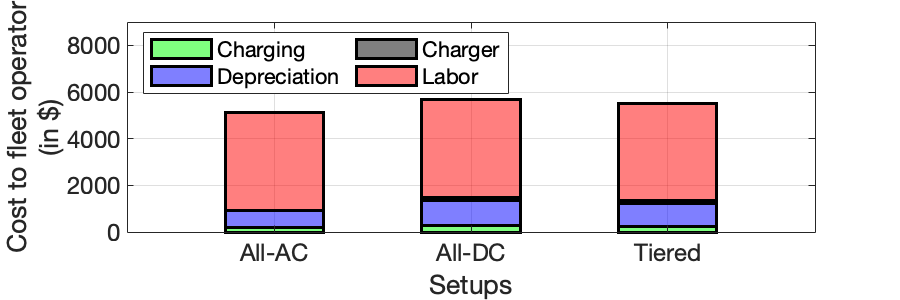}
\caption{Breakdown of the monthly cost to fleet operator.}
\label{avg_month_ev_cost}
\end{figure}



\subsubsection*{Justification of the single-interval transit assumption}

Transit between consecutive service locations is assumed to be completed within one 15-min interval. This assumption is reasonable for this case study because the C\&I users are geographically concentrated in a dense urban area. Figure~\ref{fig:dist_transit} shows the distribution of inter-service transit distances obtained from the optimized dispatch schedule for a representative month under the tiered charging configuration with a single EV. The resulting distances are generally short, with a mean distance of approximately 1.34~miles. The single-EV case is a restrictive setting because the same EV must sequentially travel between assigned service locations. In the multiple-EV case, the platform has greater spatial flexibility to assign requests to nearby EVs, so the average transit distance per service is expected to be smaller. These observations support the use of the single-interval transit assumption for the considered service region.

\begin{figure}[htbp]
    \centering
    \includegraphics[width=0.5\linewidth]{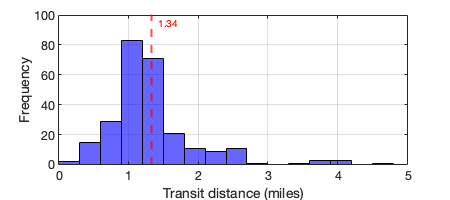}
    \caption{Distribution of transit distances with single EV under tiered setup.}
    \label{fig:dist_transit}
\end{figure}

\subsubsection*{Sensitivity to labor cost}

The sensitivity of net savings to labor cost for the tiered configuration is shown in Figure~\ref{sen_ana_labor_cost}. Labor rates are varied from \$0/hr, representing a fully autonomous vehicle operation with no driver labor cost, to \$300/hr. As expected, net savings decrease with increasing labor rate. The critical labor cost is approximately \$120/hr in winter and \$210/hr in summer, beyond which net savings become negative. The dotted red line denotes a baseline labor cost of \$19/hr, well below the critical values in both seasons.
 \begin{figure}[htbp]
\centering
\includegraphics[width=0.5\textwidth]{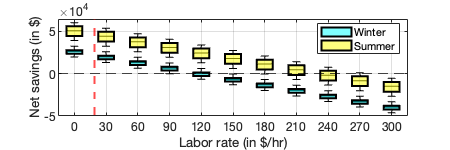}
\caption{Net savings sensitivity to labor cost.}
\label{sen_ana_labor_cost}
\end{figure}

\subsubsection*{Sensitivity to depreciation cost}

We also perform a sensitivity analysis of the net savings to the depreciation cost by varying both the usage-based and age-based depreciation rates. Figure~\ref{fig:sen_ana_dep_cost} illustrates the resulting net savings during winter and summer seasons under a tiered setup. The usage-based depreciation rate is varied from \$0.5/kWh to \$1/kWh, while the age-based depreciation rate is varied from \$150/month to \$250/month.

The net savings decrease linearly as either the usage-based or age-based depreciation rate increases. The impact of the usage-based depreciation rate is more pronounced during summer than winter because summer demand-charge rates are substantially higher, increasing the economic value of battery utilization for peak reduction. Consequently, EVs are dispatched more frequently and experience greater battery throughput during summer. As a result, each incremental increase in the usage-based depreciation rate leads to a larger reduction in summer net savings than in winter net savings. In contrast, the age-based depreciation cost is independent of battery utilization and therefore affects winter and summer net savings equally. Nevertheless, the achievable net savings remain relatively robust across the considered parameter ranges. In particular, break-even conditions occur only at substantially higher depreciation rates, indicating that the economic feasibility of the proposed framework is relatively insensitive to realistic variations in battery depreciation costs.
\begin{figure*}[htbp]
\centering

\begin{subfigure}[b]{0.48\textwidth}
    \centering
    \includegraphics[width=\linewidth]{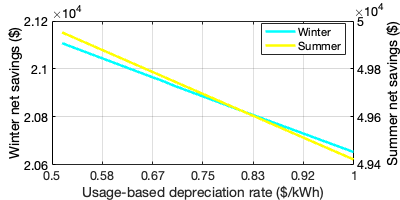}
    \caption{Usage-based}
    \label{fig:winter_dep}
\end{subfigure}
\hfill
\begin{subfigure}[b]{0.48\textwidth}
    \centering
    \includegraphics[width=\linewidth]{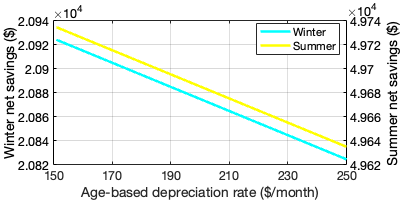}
    \caption{Age-based}
    \label{fig:summer_dep}
\end{subfigure}

\caption{Sensitivity plots for depreciation cost with a single EV under tiered setup.}
\label{fig:sen_ana_dep_cost}
\end{figure*}

\subsection{Multiple EV scenario}

Next, we examine the case where multiple EVs provide the services. The MILP could not reach an optimal solution even for two EVs within a five-hour limit, so all results in this section are obtained using the proposed algorithm with $S = 10$.  Although computation time increases with fleet size, it remains manageable for practical use. For instance, running the full optimization for twenty EVs takes about one hour.

The net savings for winter and summer months across the three setups are shown in Figures~\ref{net_savings_mult_evs}(a) and \ref{net_savings_mult_evs}(b), respectively. The Tiered setup consistently yields higher net savings than both All-AC and All-DC in each season. Considering the average net savings across the months, the optimal fleet sizes during winter is two EVs for All-AC, and three EVs for both All-DC and Tiered setups. In summer, the optimal fleet sizes increase to four for All-AC and six for both All-DC and Tiered, reflecting greater service opportunities and higher demand charge reduction potential.

\begin{figure}[htbp]
    \centering
    \begin{subfigure}{0.48\textwidth}
        \centering
        \includegraphics[width=\linewidth]{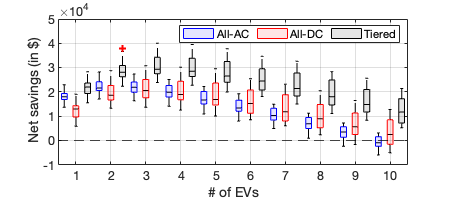}
        \caption{Net savings during winter months.}
        \label{net_savings_mult_evs_winter}
    \end{subfigure}
    \hfill
    \begin{subfigure}{0.48\textwidth}
        \centering
        \includegraphics[width=\linewidth]{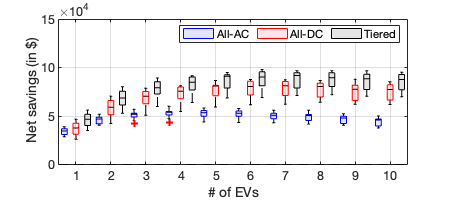}
        \caption{Net savings during summer months.}
        \label{net_savings_mult_evs_summer}
    \end{subfigure}
    \caption{Net savings with number of EVs under different infrastructure configurations by season.}
    \label{net_savings_mult_evs}
\end{figure}

The underlying demand charge reductions for B-10 and B-19 users during summer and winter are shown in Figure~\ref{dcr_mult_ev_all_setups}. As the number of EVs increases, the total demand charge reduction initially rises due to improved service coverage but gradually plateaus beyond a certain fleet size, regardless of the season or user type. This behavior arises because each service interval provides a fixed amount of discharge energy determined by the bidirectional charger rating. Consequently, the first EVs are typically deployed to mitigate the highest-demand intervals that contribute most to billed demand. Once these dominant peaks have been reduced, additional EVs can only target progressively smaller remaining peaks, resulting in diminishing marginal demand-charge reductions.

On the cost side, however, the monthly cost to the fleet operator continues to grow with fleet size, as illustrated in Figure~\ref{cost_ev_operator_vs_evs} for the Tiered setup. Under the assumptions considered in this study, labor cost is the dominant contributor to this increase. Consequently, net savings eventually plateau and may decline when the additional operating costs exceed the incremental revenue from further peak shaving. While diminishing marginal demand-charge reduction is a general consequence of peak-based billing, the fleet size that maximizes net savings depends on the specific cost assumptions, tariff structure, load characteristics, and charger capacity. Therefore, the optimal fleet sizes identified in this study should be interpreted as case-specific results rather than universal conclusions.

 \begin{figure}[htbp]
    \centering
    \begin{subfigure}{0.48\textwidth}
        \centering
        \includegraphics[width=\linewidth]{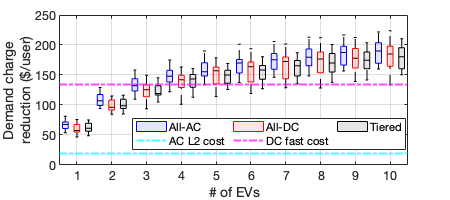}
        \caption*{(a) B-10 user during winter months.}
    \end{subfigure}
    \begin{subfigure}{0.48\textwidth}
       \centering
        \includegraphics[width=\linewidth]{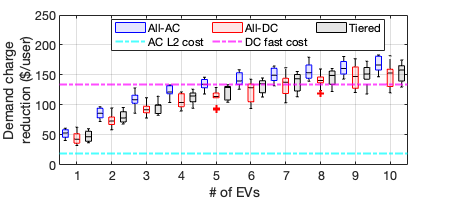}
        \caption*{(b) B-10 user during summer months.}
    \end{subfigure}
    \hfill
        \begin{subfigure}{0.48\textwidth}
        \centering
        \includegraphics[width=\linewidth]{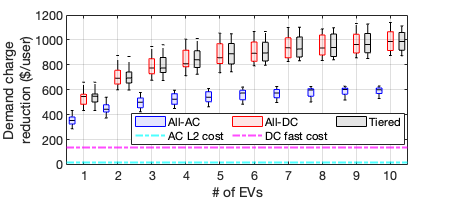}
        \caption*{(c) B-19 user during winter months.}
    \end{subfigure}
    \begin{subfigure}{0.48\textwidth}
    \centering
        \includegraphics[width=\linewidth]{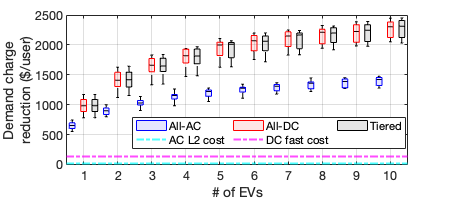}
        \caption*{(d) B-19 user during summer months.}
    \end{subfigure}
    \caption{Demand charge reduction with number of EVs under different infrastructure configurations by season and user type.}
    \label{dcr_mult_ev_all_setups}
\end{figure}

\begin{figure}[h]
\centering
\includegraphics[width=0.45\textwidth]{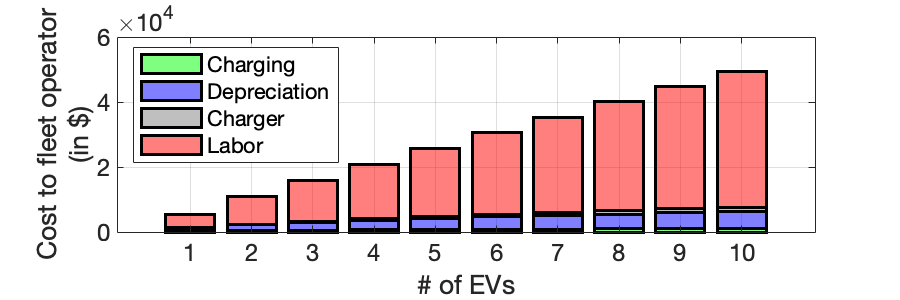}
\caption{Breakdown of the monthly cost to fleet operator under the Tiered setup.}
\label{cost_ev_operator_vs_evs}
\end{figure} 

To further examine the utilization of additional fleet capacity, we evaluate the average peak-demand reduction achieved per service under different fleet sizes, seasons, and charging-infrastructure configurations, as shown in Figure~\ref{peak_dem_red_mult_evs}. The results indicate that the average peak-demand reduction per service generally decreases as fleet size increases. This behavior is expected because the first few EVs are assigned to the most critical peak-demand intervals, yielding the largest reductions in customer peak demand. As additional EVs are deployed, they are increasingly assigned to lower-priority intervals with smaller peak-reduction potential, resulting in diminishing marginal benefits per service. Therefore, the observed decrease in average peak-demand reduction per service reflects the diminishing value of additional fleet capacity rather than reduced operational effectiveness. These findings are consistent with the demand-charge-reduction results reported in Figure~\ref{dcr_mult_ev_all_setups} and further confirm that the proposed framework prioritizes the intervals that contribute most significantly to customer demand charges.

\begin{figure}[htbp]
    \centering
    \begin{subfigure}{0.48\textwidth}
        \centering
        \includegraphics[width=\linewidth]{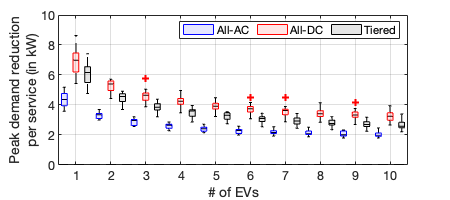}
        \caption{Peak demand reduction during winter.}
        \label{peak_dem_red_mult_evs_winter}
    \end{subfigure}
    \hfill
    \begin{subfigure}{0.48\textwidth}
        \centering
        \includegraphics[width=\linewidth]{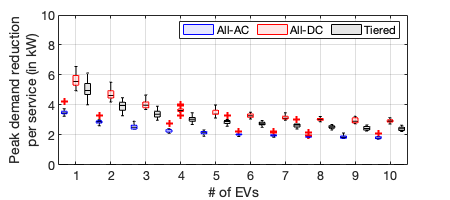}
        \caption{Peak demand reduction during summer.}
        \label{peak_dem_red_mult_evs_summer}
    \end{subfigure}
    \caption{Peak demand reduction with number of EVs under different infrastructure configurations by season.}
    \label{peak_dem_red_mult_evs}
\end{figure}

\begin{remark}
The net savings achieved by the proposed framework depend strongly on the underlying tariff structure, since demand charge formulations, seasonal pricing, and billing mechanisms directly influence the economic value of peak reduction. Consequently, identical EV dispatch strategies may yield different levels of savings under different utility tariffs. To illustrate this sensitivity, additional evaluations under the Puget Sound Energy (PSE) tariff are provided in Appendix~\ref{app:PSE_results}.
\end{remark}

\subsubsection*{Sensitivity to uncertainty}

Figure~\ref{savings_vs_uncertainty} illustrates the effect of load forecast uncertainty on net savings for a representative winter and summer month under the tiered setup with three EVs. Operational decisions are optimized using imperfect load forecasts obtained by perturbing the true load with random errors whose magnitude is expressed as a percentage of the average load, and the resulting control actions are then applied to the true load to compute net savings. As forecast uncertainty increases, control actions planned using imperfect forecasts become progressively misaligned with the realized load, leading to a monotonic decline in net savings. The decline is more pronounced in summer due to higher demand charges and sharper peaks, while winter savings decline more gradually. Beyond moderate uncertainty levels (5\% in winter and 10\% in summer), net savings vanish, indicating reduced effectiveness of demand charge reduction under higher forecast uncertainty. If driver labor costs are eliminated, as would be the case with autonomous EVs, the break-even uncertainty threshold extends to approximately 15\% in winter and 19\% in summer, placing the business model within uncertainty levels commonly encountered in practical operational settings.

 \begin{figure}[h]
\centering
\includegraphics[width=0.45\textwidth]{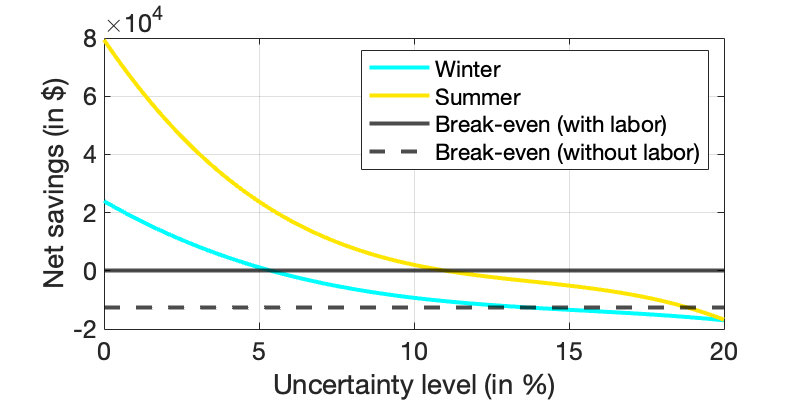}
\caption{Net savings with uncertainty.}
\label{savings_vs_uncertainty}
\end{figure}

There could be other operational uncertainties that influence the performance of the proposed framework. For example, variations in travel time due to traffic conditions can affect EV arrival schedules and delay peak-shaving services during critical intervals. Similarly, EV availability may vary due to driver scheduling uncertainties, vehicle maintenance, accidents, or other unexpected operational conditions, which can reduce the number of EVs available for dispatch at a given time. Charger outages or communication failures may also temporarily limit service capability and reduce operational flexibility. Although the current work provides an initial analysis of load forecast uncertainty, these additional uncertainties could be incorporated in future studies through stochastic or scenario-based modeling approaches to evaluate their impact on dispatch decisions, demand charge reduction, and overall economic performance.

\section{Conclusion}\label{sec:CON}
This paper develops a high-fidelity fleet operation framework to quantify the true economic value of shared mobile energy storage for demand charge reduction. Unlike idealized models, our approach explicitly accounts for the transit energy consumption, labor costs, and battery degradation unders a unified framework. We formulated the dispatch problem as a MILP and developed a marginal-value-based heuristic algorithm to efficiently solve it under these complex physical constraints. Using empirical load profiles of user in San Francisco, California served by PG\&E, we show that this business model is economically attractive, particularly in summer months compared to winter. The Tiered setup consistently yields higher net savings than both the All-AC and All-DC setups across both seasons. During winter months, the optimal fleet size for the Tiered setup is three EVs, achieving monthly net savings of approximately \$30,000. During summer months, the optimal fleet size increases to six EVs, yielding monthly net savings approaching \$100,000. Large users achieve substantially higher savings than smaller ones, though the marginal benefit of additional EVs diminishes with scale. Future work will extend this framework to account for uncertainty in user load profiles, travel-time variations, EV availability, and charger reliability.

\bibliographystyle{IEEEtran}
\bibliography{references}

\appendices

\section{Proof of Lemma~1}
\label{app:proof_lemma1}

Fix EV $j$ and its dispatch matrix $\mathbf{M}_{j}$; together with the users' service decisions $\mathbf{u}^{\star}$ this fixes the constants $\rho_{j,t}:=\tfrac{1}{\delta_{j}^{\mathrm{tr}}}\big\|(x_{j,t},y_{j,t})-(x_{j,t-1},y_{j,t-1})\big\|_{1}\ge 0$ (the required transit energy, i.e.\ the right-hand side of \eqref{ev:transit_ener}), $e^{\mathrm{dis}}_{j,t}\ge 0$, and $m_{j,t}\in\{0,1\}$. With \eqref{ev:transit_ener} relaxed to $e^{\mathrm{tr}}_{j,t}\ge \rho_{j,t}$, we prove that every optimal solution $(\mathbf{e}^{\mathrm{ch}*}_{j},\mathbf{b}^{*}_{j},\mathbf{e}^{\mathrm{tr}*}_{j})$ of the charging subproblem \eqref{ev:charging_cost} satisfies $e^{\mathrm{tr}*}_{j,t}=\rho_{j,t}$ for all $t$.

\smallskip
\noindent\textit{Reduction to one day.}
By the end-of-day reset \eqref{eq:daily_cyc}, the state of charge is pinned to $b^{\mathrm{EOD}}_{j}$ at the end-of-day interval of each day, so consecutive days share that pinned value and decouple. Let $\mathcal{T}_{d}=\{t^{0}_{d},\dots,t^{\mathrm{E}}_{d}\}$ denote the intervals of day $d$, with $t^{\mathrm{E}}_{d}\in\mathcal{T}^{\mathrm{EOD}}$ and $t^{0}_{d}-1$ the end-of-day interval of the previous day, so that both endpoints of day $d$ are fixed,
\begin{equation}\label{eq:lem_pin}
b^{*}_{j,t^{0}_{d}-1}=b^{*}_{j,t^{\mathrm{E}}_{d}}=b^{\mathrm{EOD}}_{j}
\end{equation}
(for the first day, $t^{0}_{d}-1$ carries the fixed initial state of charge; the argument uses only that both endpoints are fixed). A perturbation confined to $\mathcal{T}_{d}$ that preserves $b_{j,t^{\mathrm{E}}_{d}}$ leaves every other day, every other EV, and $\mathbf{M}_{j},\mathbf{e}^{\mathrm{dis}}_{j}$ unchanged; we therefore argue within a fixed day $d$.

\smallskip
\noindent\textit{Daily energy balance.}
Summing the dynamics \eqref{eq:batt_dyn} over $\tau\in\mathcal{T}_{d}$ telescopes the state-of-charge terms; with \eqref{eq:lem_pin},
\begin{equation}\label{eq:lem_balance}
\eta_{j}^{\mathrm{ch}}\sum_{\tau\in\mathcal{T}_{d}}e^{\mathrm{ch}*}_{j,\tau}
=\frac{1}{\eta_{j}^{\mathrm{dis}}}\sum_{\tau\in\mathcal{T}_{d}}\big(e^{\mathrm{dis}}_{j,\tau}+e^{\mathrm{tr}*}_{j,\tau}\big),
\end{equation}
so the total daily charging is an increasing function of the total daily transit energy.

\smallskip
\noindent\textit{Slack hypothesis.}
Suppose, for contradiction, that the relaxed constraint is slack at some $t\in\mathcal{T}_{d}$:
\begin{equation}\label{eq:lem_slack}
\varepsilon_{0}:=e^{\mathrm{tr}*}_{j,t}-\rho_{j,t}>0,\qquad\text{whence } e^{\mathrm{tr}*}_{j,t}>0.
\end{equation}
Since $e^{\mathrm{tr}*}_{j,t}>0$, EV $j$ changes location between $t-1$ and $t$ and is therefore not at the station in both intervals, so $m_{j,t-1}m_{j,t}=0$ and the charging gate \eqref{ev:char} forces
\begin{equation}\label{eq:lem_nocharge}
e^{\mathrm{ch}*}_{j,t}=0 .
\end{equation}
(This uses only $e^{\mathrm{tr}*}_{j,t}>0$; the value $\rho_{j,t}$ is irrelevant, so the case $\rho_{j,t}=0$ needs no separate treatment.) Hence $t$ is not a charging interval, and by \eqref{eq:lem_slack} the right-hand side of \eqref{eq:lem_balance} is strictly positive; as $\eta_{j}^{\mathrm{ch}}>0$, the set of charging intervals $\mathcal{C}_{d}:=\{\tau\in\mathcal{T}_{d}:e^{\mathrm{ch}*}_{j,\tau}>0\}$ is nonempty with $t\notin\mathcal{C}_{d}$.

\smallskip
\noindent\textit{Two compensating perturbations.}
Define the nearest charging intervals on either side of $t$ (each possibly absent),
\[
r:=\min\{\tau\in\mathcal{C}_{d}:\tau>t\},\qquad s:=\max\{\tau\in\mathcal{C}_{d}:\tau<t\}.
\]
For a step $\epsilon\in(0,\varepsilon_{0}]$, set $\tilde e^{\mathrm{tr}}_{j,t}:=e^{\mathrm{tr}*}_{j,t}-\epsilon\ (\ge\rho_{j,t}\ge0)$, keep all other transit values, and compensate at a single interval $\sigma\in\{r,s\}$ by $\tilde e^{\mathrm{ch}}_{j,\sigma}:=e^{\mathrm{ch}*}_{j,\sigma}-\epsilon/(\eta_{j}^{\mathrm{ch}}\eta_{j}^{\mathrm{dis}})$, regenerating $\tilde{\mathbf{b}}_{j}$ from \eqref{eq:batt_dyn}. Reducing the drain at $t$ by $\epsilon/\eta_{j}^{\mathrm{dis}}$ and the charge injection at $\sigma$ by $\eta_{j}^{\mathrm{ch}}\cdot\epsilon/(\eta_{j}^{\mathrm{ch}}\eta_{j}^{\mathrm{dis}})=\epsilon/\eta_{j}^{\mathrm{dis}}$, the state-of-charge deviation $w_{j,\tau}:=\tilde b_{j,\tau}-b^{*}_{j,\tau}$ is
\begin{equation}\label{eq:lem_dev}
w_{j,\tau}=\frac{\epsilon}{\eta_{j}^{\mathrm{dis}}}\big(\mathbf{1}[\tau\ge t]-\mathbf{1}[\tau\ge\sigma]\big),\qquad \tau\in\mathcal{T}_{d}.
\end{equation}
Thus $w_{j,\tau}=0$ outside the window strictly between $t$ and $\sigma$; in particular $w_{j,t^{\mathrm{E}}_{d}}=0$, so the reset \eqref{eq:daily_cyc} and the pin \eqref{eq:lem_pin} are preserved automatically and \eqref{eq:batt_dyn} holds by construction. As the perturbation only lowers $e^{\mathrm{ch}}_{j,\sigma}$ and leaves its cap untouched, the gate \eqref{ev:char} holds provided $\tilde e^{\mathrm{ch}}_{j,\sigma}\ge0$, i.e.\ $\epsilon\le\eta_{j}^{\mathrm{ch}}\eta_{j}^{\mathrm{dis}}e^{\mathrm{ch}*}_{j,\sigma}$, which holds for small $\epsilon>0$ since $e^{\mathrm{ch}*}_{j,\sigma}>0$. By \eqref{eq:lem_dev}, feasibility then reduces to the box \eqref{eq:batt_lev}, and the two options carry disjoint risks:
\begin{itemize}
\item[(A)] $\sigma=r$: $w_{j,\tau}=+\epsilon/\eta_{j}^{\mathrm{dis}}>0$ on $[t,r-1]$ and $0$ elsewhere; only the \emph{upper} bound can fail, and only if $b^{*}_{j,\tau}=\overline{b}_{j}$ for some $\tau\in[t,r-1]$.
\item[(B)] $\sigma=s$: $w_{j,\tau}=-\epsilon/\eta_{j}^{\mathrm{dis}}<0$ on $[s,t-1]$ and $0$ elsewhere; only the \emph{lower} bound can fail, and only if $b^{*}_{j,\tau}=\underline{b}_{j}$ for some $\tau\in[s,t-1]$.
\end{itemize}
If the relevant one-sided bound is strictly inactive throughout the window, \eqref{eq:batt_lev} holds for all sufficiently small $\epsilon>0$.

\smallskip
\noindent\textit{At least one option is feasible.}
Since $\mathcal{C}_{d}\neq\varnothing$ and $t\notin\mathcal{C}_{d}$, at least one of $r,s$ exists.

\emph{Both exist.} As $r,s$ are the nearest charging intervals straddling $t$, no charging occurs strictly between them, so by \eqref{eq:batt_dyn} $b^{*}_{j,\tau}-b^{*}_{j,\tau-1}=-(1/\eta_{j}^{\mathrm{dis}})(e^{\mathrm{dis}}_{j,\tau}+e^{\mathrm{tr}*}_{j,\tau})\le0$ for $\tau\in\{s+1,\dots,r-1\}$; hence $b^{*}_{j,\cdot}$ is non-increasing on $\{s,\dots,r-1\}$, which contains $[s,t-1]$ followed by $[t,r-1]$. If both options failed, then $b^{*}_{j,k_{1}}=\overline{b}_{j}$ for some $k_{1}\in[t,r-1]$ and $b^{*}_{j,k_{2}}=\underline{b}_{j}$ for some $k_{2}\in[s,t-1]$; but $k_{2}<t\le k_{1}$ lie in the non-increasing stretch, so $\underline{b}_{j}=b^{*}_{j,k_{2}}\ge b^{*}_{j,k_{1}}=\overline{b}_{j}$, contradicting $\underline{b}_{j}<\overline{b}_{j}$. Thus one window is strictly interior and yields a feasible $\epsilon>0$.

\emph{Only $s$ exists.} Then no charging occurs on $(s,t^{\mathrm{E}}_{d}]$, so for $k\in[s,t-1]$, summing \eqref{eq:batt_dyn} over $(k,t^{\mathrm{E}}_{d}]$ and using \eqref{eq:lem_pin},
\[
b^{*}_{j,k}=b^{\mathrm{EOD}}_{j}+\frac{1}{\eta_{j}^{\mathrm{dis}}}\sum_{\tau=k+1}^{t^{\mathrm{E}}_{d}}\big(e^{\mathrm{dis}}_{j,\tau}+e^{\mathrm{tr}*}_{j,\tau}\big)\ge b^{\mathrm{EOD}}_{j}+\frac{e^{\mathrm{tr}*}_{j,t}}{\eta_{j}^{\mathrm{dis}}}>b^{\mathrm{EOD}}_{j}\ge\underline{b}_{j},
\]
because the strictly positive drain $e^{\mathrm{tr}*}_{j,t}$ occurs at $t>k$. The lower bound is strictly inactive on $[s,t-1]$, so (B) is feasible for small $\epsilon>0$.

\emph{Only $r$ exists.} Then no charging occurs on $[t^{0}_{d},r-1]$, so for $k\in[t,r-1]$, summing \eqref{eq:batt_dyn} over $[t^{0}_{d},k]$ and using \eqref{eq:lem_pin},
\[
b^{*}_{j,k}=b^{\mathrm{EOD}}_{j}-\frac{1}{\eta_{j}^{\mathrm{dis}}}\sum_{\tau=t^{0}_{d}}^{k}\big(e^{\mathrm{dis}}_{j,\tau}+e^{\mathrm{tr}*}_{j,\tau}\big)\le b^{\mathrm{EOD}}_{j}-\frac{e^{\mathrm{tr}*}_{j,t}}{\eta_{j}^{\mathrm{dis}}}<b^{\mathrm{EOD}}_{j}\le\overline{b}_{j},
\]
because the drain at $t\le k$ is strictly positive. The upper bound is strictly inactive on $[t,r-1]$, so (A) is feasible for small $\epsilon>0$.

\smallskip
\noindent\textit{Strict improvement.}
For a feasible option and a sufficiently small $\epsilon>0$, the perturbation lowers only $e^{\mathrm{tr}}_{j,t}$ (by $\epsilon$) and $e^{\mathrm{ch}}_{j,\sigma}$ (by $\epsilon/(\eta_{j}^{\mathrm{ch}}\eta_{j}^{\mathrm{dis}})$), with $\mathbf{e}^{\mathrm{dis}}_{j}$ fixed. In \eqref{Dep_cost} the depreciation term $\gamma_{j}\mathbf{1}^{\intercal}(\mathbf{e}^{\mathrm{dis}}_{j}+\mathbf{e}^{\mathrm{tr}}_{j})$ decreases by exactly $\gamma_{j}\epsilon>0$. In \eqref{ev:charging_cost}, $\tilde{\mathbf{e}}^{\mathrm{ch}}_{j}\le\mathbf{e}^{\mathrm{ch}*}_{j}$ componentwise, so the energy charge $\pmb{\pi}_{0}^{\intercal}\sum_{j}\mathbf{e}^{\mathrm{ch}}_{j}$ weakly decreases ($\pmb{\pi}_{0}\ge0$), and lowering EV $j$'s charging at the single interval $\sigma$ makes every aggregate $\sum_{j}e^{\mathrm{ch}}_{j,\tau}$ non-increasing, so each peak $\max_{\tau\in\mathcal{T}^{\theta}}\sum_{j}e^{\mathrm{ch}}_{j,\tau}$ and hence the demand charge weakly decreases. All other objective terms---in particular the users' demand charges---are unchanged. The total objective therefore strictly decreases (by at least $\gamma_{j}\epsilon>0$), contradicting optimality. Hence no slack \eqref{eq:lem_slack} can occur, i.e.\ $e^{\mathrm{tr}*}_{j,t}=\rho_{j,t}$ for all $t$: the relaxation of \eqref{ev:transit_ener} is binding at every optimum.

\section{Computational Performance Evaluation}

\subsection{Comparison with MILP Bounds}

To assess the computational performance and solution quality of the proposed algorithm relative to the exact MILP formulation, we compare their performance on representative multi-EV instances. Figure~\ref{fig:runtime_convergence} summarizes the convergence behavior of both approaches. Figures~\ref{fig:runtime_convergence}(a) and \ref{fig:runtime_convergence}(c) show the evolution of the objective value over time. The red line denotes the incumbent objective value reported by Gurobi, the black line denotes the lower bound obtained from the MILP relaxation, and the blue line denotes the solution obtained by the proposed algorithm. Figures~\ref{fig:runtime_convergence}(b) and \ref{fig:runtime_convergence}(d) show the corresponding MILP optimality gap trajectories. The upper and lower bounds are generated by Gurobi during its presolve, root-node relaxation, and branch-and-bound procedures, enabling the progress of the MILP solution process to be monitored over time.

The results show that the proposed algorithm obtains a high-quality solution within practical run-times (less than 15 minutes), while the MILP continues refining its incumbent solution and bounds throughout the five-hour time limit. Moreover, a nonzero MIP optimality gap remains at the end of the simulation horizon, indicating that the MILP has not fully converged for these instances. The proposed algorithm's solution remains close to the best-known MILP bounds, providing empirical evidence of its effectiveness for solving large-scale multi-EV scheduling problems within practical computational times.

\begin{figure*}[tbh]
    \centering

    \begin{subfigure}[b]{0.48\textwidth}
        \centering
        \includegraphics[width=\linewidth]{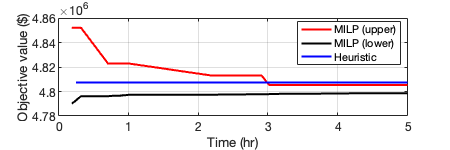}
        \caption{Winter: Upper and lower bound evolution.}
        \label{fig:winter_bounds}
    \end{subfigure}
    \hfill
    \begin{subfigure}[b]{0.48\textwidth}
        \centering
        \includegraphics[width=\linewidth]{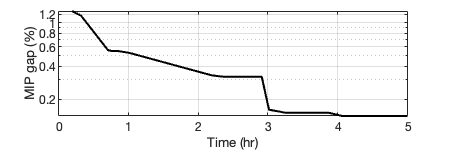}
        \caption{Winter: MILP gap evolution.}
        \label{fig:winter_gap}
    \end{subfigure}

    \vspace{0.4cm}

    \begin{subfigure}[b]{0.48\textwidth}
        \centering
        \includegraphics[width=\linewidth]{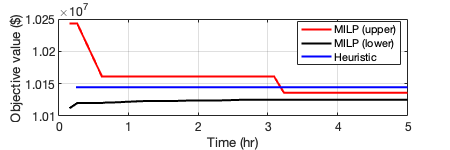}
        \caption{Summer: Upper and lower bound evolution.}
        \label{fig:summer_bounds}
    \end{subfigure}
    \hfill
    \begin{subfigure}[b]{0.48\textwidth}
        \centering
        \includegraphics[width=\linewidth]{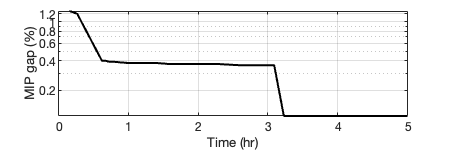}
        \caption{Summer: MILP gap evolution.}
        \label{fig:summer_gap}
    \end{subfigure}

    \caption{Evolution of optimization performance metrics for representative winter and summer months. Subfigures (a) and (c) show the evolution of the incumbent upper bound and best lower bound over runtime, while subfigures (b) and (d) illustrate the corresponding MILP gap convergence behavior.}
    
    \label{fig:runtime_convergence}
\end{figure*}

\subsection{Algorithm performance with fleet size}

To evaluate scalability with respect to fleet size, we varied the number of EVs from 1 to 20 while keeping all other parameters unchanged. The resulting runtimes are reported in Figure~\ref{fig:runtimes_evs}, where the blue, red, and black curves correspond to the All-AC, All-DC, and Tiered charging configurations, respectively. The figure shows that runtime increases approximately linearly with the fleet size for all charging configurations. The All-AC setup exhibits the lowest runtime across all fleet sizes. For fleet sizes up to approximately 16 EVs, the Tiered and All-DC setups exhibit similar runtimes. However, for larger fleet sizes, the Tiered setup becomes computationally less demanding than the All-DC setup. This behavior is attributable to the mixed charger deployment in the Tiered setup, which results in fewer feasible service opportunities than the All-DC setup and consequently a smaller search space for the heuristic algorithm.

\begin{figure}[htbp]
\centering
\includegraphics[width=0.5\linewidth]{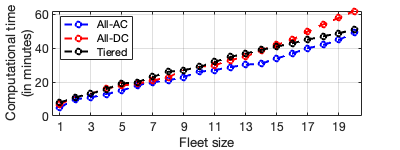}
\caption{Computation time of the proposed algorithm with fleet size.}
\label{fig:runtimes_evs}
\end{figure}

Regarding solution quality, direct comparison with the optimal MILP solution is impractical for large instances. Even for the 136-user case with only two EVs, the MILP did not converge within the imposed five-hour time limit and retained a nonzero optimality gap. This computational difficulty highlights the practical need for efficient heuristic approaches when considering realistic large-scale deployments.

\subsection{Algorithm performance with building size}

The original smart meter dataset consisted of 730 business buildings. After data preprocessing, users whose load profiles contained more than 10\% missing measurements were removed, resulting in a dataset of 470 buildings. Among these, 136 users were originally subject to demand charges under PG\&E's B-10 or B-19 tariffs and were therefore included in the primary case study. The remaining users exhibited peak demand levels below 75 kW and were not eligible for these demand-charge tariffs.

To investigate scalability over a larger population, we generated additional simulation instances by proportionally scaling the load profiles of these remaining users such that their peak demands fell within the B-10 or B-19 customer categories. This scaling was performed solely for computational benchmarking purposes and should not be interpreted as representing realistic customer demand-charge savings. Consequently, the resulting savings values are not used for economic analysis; rather, these instances serve exclusively to evaluate the computational performance and scalability of the proposed algorithm.

Specifically, we considered user populations of 50, 100, 136, 150, 200, 250, 300, 350, 400, 450, and 470 users, where the 136-user case corresponds to the primary case study presented in the manuscript. All results correspond to a fleet size of three EVs and are evaluated for representative summer and winter months. The resulting runtime performance for different numbers of users is reported in Figure~\ref{fig:runtimes_users}, where the blue, red, and black curves correspond to the All-AC, All-DC, and Tiered charging configurations, respectively. 

\begin{figure}[htbp]
    \centering
    \includegraphics[width=0.5\linewidth]{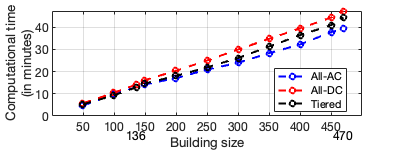}
    \caption{Computational time of the proposed algorithm with building size.}
    \label{fig:runtimes_users}
\end{figure}

The figure shows that runtime increases with the number of users for all charging configurations, exhibiting an approximately linear trend. Furthermore, the All-AC setup consistently exhibits the lowest runtime, followed by the Tiered setup and then the All-DC setup. Consistent with fleet size study, this behavior arises because lower charging capacities reduce the number of feasible service intervals that satisfy the energy requirements, thereby decreasing the number of candidate assignments evaluated by the algorithm. In contrast, higher charging capacities increase the number of feasible service opportunities, resulting in a larger search space and consequently longer runtimes.

Regarding solution quality, direct comparison with the optimal MILP solution becomes computationally intractable for large instances. In particular, for instances containing more than 100 users with a fleet size of three EVs, the MILP formulation did not converge within the imposed five-hour time limit and retained a non-zero optimality gap. Consequently, exact optimality-gap comparisons are not feasible for the larger user populations considered in this scalability study. This observation further demonstrates the practical need for computationally efficient heuristic methods when addressing realistic large-scale deployments.

\section{Evaluation under Puget Sound Energy Tariffs}\label{app:PSE_results}
This appendix presents additional results using tariffs from Puget Sound Energy to evaluate the proposed framework under an alternative commercial electricity pricing structure. The same 136 users from the main case study are considered here and reclassified under PSE Schedule 25 or Schedule 26 based on their mean monthly peak demand. Schedule 25 applies to users with peak demand between 50–350 kW (71 users), while Schedule 26 applies to users above 350 kW (65 users). Both tariffs include seasonal demand rates, with higher rates during winter months (October–March). Under Schedule 25, the demand rates are \$11.41/kW in summer and \$17.10/kW in winter, while under Schedule 26 the corresponding rates are \$14.42/kW and \$21.63/kW, respectively. Unlike the PG\&E tariffs, the PSE tariffs do not include TOU demand rates. For the charging station, PSE Schedule 25 is considered, with seasonal energy rates of \$0.1153/kWh during summer months and \$0.1243/kWh during winter months.

Figure~\ref{pse_net_savings_mult_evs} shows the monthly net savings under the PSE tariffs for different fleet sizes and infrastructure configurations across winter and summer seasons. Overall, the achievable savings are significantly lower than those obtained under the PG\&E tariffs. This reduction is primarily due to the absence of TOU demand charges in the PSE tariff structure, which limits the economic benefit through targeted peak shaving.

\begin{figure}[htbp]
    \centering
    \begin{subfigure}{0.48\textwidth}
        \centering
        \includegraphics[width=\linewidth]{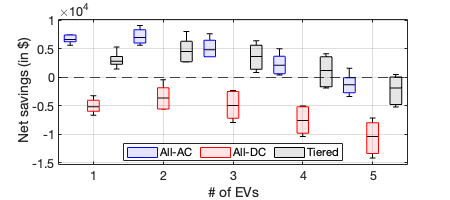}
        \caption{Net savings during winter months.}
        \label{pse_net_savings_mult_evs_winter}
    \end{subfigure}
    \hfill
    \begin{subfigure}{0.48\textwidth}
        \centering
        \includegraphics[width=\linewidth]{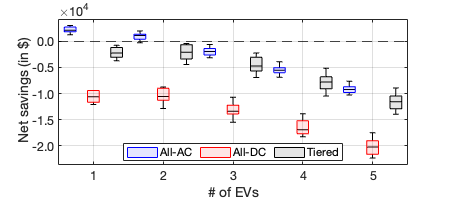}
        \caption{Net savings during summer months.}
        \label{pse_net_savings_mult_evs_summer}
    \end{subfigure}
    \caption{Net savings with number of EVs under different infrastructure configurations by season.}
    \label{pse_net_savings_mult_evs}
\end{figure}

Net savings are consistently higher during winter months because the PSE tariffs impose larger demand charges during October--March. In contrast, during summer months, nearly all configurations yield negative net savings, indicating that the achievable demand charge reductions are insufficient to offset charging, labor, transit, and infrastructure-related costs. The only profitable summer configuration corresponds to the all-AC setup with a single EV.

During winter months, the all-AC setup with two EVs achieves the highest monthly net savings of approximately \$7,500. The all-DC configuration remains economically unfavorable across all fleet sizes and seasons due to its higher infrastructure cost. Meanwhile, the tiered setup does not consistently outperform the all-AC setup, suggesting that the additional fast-charging capability does not always provide sufficient economic benefit under the PSE tariff structure.

These results indicate that the economic viability of the proposed business model is highly dependent on the tariff design. Under tariffs with relatively moderate demand charges and without TOU demand pricing, profitable operation may only be achievable during high-demand seasons and for limited fleet sizes.

To further interpret the net savings behavior, Figure~\ref{pse_dcr_mult_ev_all_setups} show the demand charge reduction achieved for Schedule 25 and Schedule 26 users across different seasons, fleet sizes, and infrastructure configurations. In general, the demand charge reduction initially increases with the number of EVs but gradually saturates as the dominant peaks are reduced. For Schedule 25 users, the achieved reductions during both summer and winter months are generally insufficient to offset the higher infrastructure costs associated with DC fast chargers, although they remain adequate for AC charging configurations. In contrast, Schedule 26 users achieve substantially larger demand charge reductions, making DC fast charging economically feasible in several cases.

 \begin{figure}[htbp]
    \centering
    \begin{subfigure}{0.48\textwidth}
        \centering
        \includegraphics[width=\linewidth]{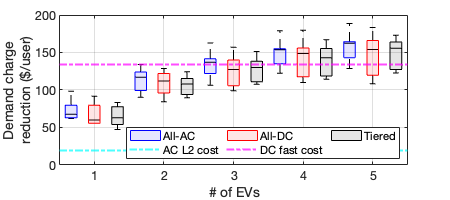}
        \caption*{(a) Schedule-25 user during winter months.}
    \end{subfigure}
    \begin{subfigure}{0.48\textwidth}
       \centering
        \includegraphics[width=\linewidth]{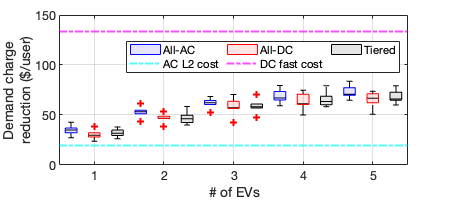}
        \caption*{(b) Schedule-25 user during summer months.}
    \end{subfigure}
    \hfill
        \begin{subfigure}{0.48\textwidth}
        \centering
        \includegraphics[width=\linewidth]{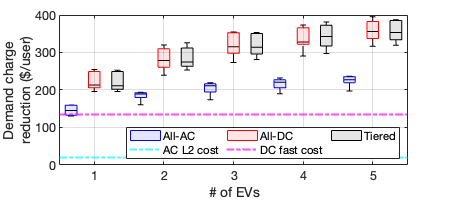}
        \caption*{(c) Schedule-26 user during winter months.}
    \end{subfigure}
    \begin{subfigure}{0.48\textwidth}
    \centering
        \includegraphics[width=\linewidth]{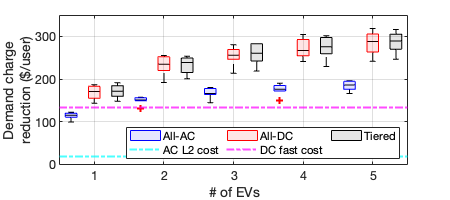}
        \caption*{(d) Schedule-26 user during summer months.}
    \end{subfigure}
    \caption{Demand charge reduction with number of EVs under different infrastructure configurations by season and user type.}
    \label{pse_dcr_mult_ev_all_setups}
\end{figure}

Figure~\ref{pse_cost_ev_operator_vs_evs} further presents the monthly cost breakdown for the all-AC configuration, showing that labor cost constitutes the dominant operational expense for the fleet operator.

\begin{figure}[htbp]
\centering
\includegraphics[width=0.45\textwidth]{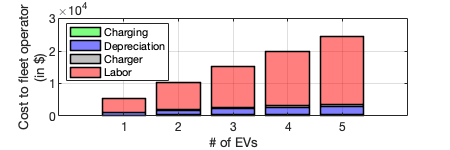}
\caption{Breakdown of the monthly cost to fleet operator under the All-AC setup.}
\label{pse_cost_ev_operator_vs_evs}
\end{figure}

\end{document}